\DeclareMathOperator{\dist}{dist}
\DeclareMathOperator{\spc}{\textsc{spc}}
\newcommand{\opt}{\rho}
\title{Fixed-Parameter Approximations for $k$-Center Problems\\in Low 
Highway Dimension Graphs\thanks{I would like to thank Jochen K{\"o}nemann for 
reading an early draft of this paper. Also I would like to thank two anonymous 
reviewers for their insightful remarks that helped to greatly improve the 
paper. A preliminary version appeared at the 42nd International Colloquium on 
Automata, Languages, and Programming (ICALP 2015). This work was supported by 
ERC Starting Grant PARAMTIGHT (No.~280152), project CE-ITI (GA\v{C}R 
no.~P202/12/G061) of the Czech Science Foundation, and by the Center for 
Foundations of Modern Computer Science (Charles Univ.\ project UNCE/SCI/004).}}
\author{Andreas Emil Feldmann}
\affil{KAM, Charles University, Czechia\\
\texttt{feldmann.a.e@gmail.com}}
\date{}
\begin{document}
\renewcommand*{\sectionautorefname}{Section}
\renewcommand*{\subsectionautorefname}{Section}
\renewcommand*{\algorithmautorefname}{Algorithm}

\maketitle

\begin{abstract}
We consider the \pname{$k$-Center} problem and some generalizations. For 
\pname{$k$-Center} a set of $k$ \emph{center vertices} needs to be found in a 
graph $G$ with edge lengths, such that the distance from any vertex of~$G$ to 
its nearest center is minimized. This problem naturally occurs in transportation 
networks, and therefore we model the inputs as graphs with bounded \emph{highway 
dimension}, as proposed by Abraham et~al.~[SODA~2010]. 

We show both approximation and fixed-parameter hardness results, and how to 
overcome them using \emph{fixed-parameter approximations}, where the two 
paradigms are combined. In particular, we prove that for any $\eps>0$ computing 
a $(2-\eps)$-approximation is W[2]-hard for parameter~$k$, and NP-hard for 
graphs with highway dimension $O(\log^2 n)$. The latter does not rule out 
fixed-parameter $(2-\eps)$-approximations for the highway dimension 
parameter~$h$, but implies that such an algorithm must have at least doubly 
exponential running time in $h$ if it exists, unless ETH fails. On the positive 
side, we show how to get below the approximation factor of~$2$ by combining the 
parameters $k$ and~$h$: we develop a fixed-parameter $3/2$-approximation with 
running time $2^{O(kh\log h)}\cdot n^{O(1)}$. Additionally we prove that, unless 
P=NP, our techniques cannot be used to compute fixed-parameter 
$(2-\eps)$-approximations for only the parameter $h$. 

We also provide similar fixed-parameter approximations for the \pname{weighted 
$k$-Center} and \pname{$(k,\mc{F})$-Partition} problems, which generalize 
\pname{$k$-Center}.
\end{abstract}

\section{Introduction}

In this paper we consider the \pname{$k$-Center} problem and some of its 
generalizations. For the problem, $k$ locations need to be found in a network, 
so that every node in the network is close to a location. More formally, the 
input is specified by an integer $k\in\mathbb{N}$ and a graph $G=(V,E)$ with 
positive edge lengths. A \emph{feasible solution} to the problem is a set 
$C\subseteq V$ of \emph{centers} such that~$|C|\leq k$. The aim is to minimize 
the maximum distance between any vertex and its closest center. That is, let 
$\dist_G(u,v)$ denote the shortest-path distance between two vertices $u,v\in V$ 
of $G$ according to the edge lengths, and $B_v(r)=\{u\in V\mid \dist_G(u,v)\leq 
r\}$ be the \emph{ball of radius $r$ around $v$}. We need to minimize the 
\emph{cost} of the solution $C$, which is the smallest value~$\opt$ for which 
$\bigcup_{v\in C} B_v(\opt)= V$. We say that a center $v\in C$ \emph{covers} a 
vertex $u\in V$ if $u\in B_v(\rho)$. Hence we can see the problem as finding $k$ 
centers covering all vertices of $G$ with balls of minimum radius.

The \pname{$k$-Center} problem naturally arises in transportation networks, 
where, for instance, it models the need to find locations for manufacturing 
plants, hospitals, police stations, or warehouses under a budget constraint. 
Unfortunately it is NP-hard to solve the problem in 
general~\cite{Vazirani01book}, and the same holds true in various models for 
transportation networks, such as planar graphs~\cite{plesnik1980planar-k-center} 
and metrics using Euclidean~($L_2$), Manhattan~($L_1$), or 
Chebyshev~($L_\infty$) distance measures~\cite{feder1988metric-k-center}. A more 
recent model for transportation networks uses the \emph{highway dimension}, 
which was introduced as a graph parameter by \citet{abraham2010highway}. The 
intuition behind its definition comes from the empirical 
observation~\cite{bast2007transit,bast2009ultrafast} that in a road network, 
starting from any point $A$ and travelling to a sufficiently far point $B$ along 
the quickest route, one is bound to pass through some member of a sparse set of 
``access points''. There are several formal definitions for the highway 
dimension that differ 
slightly~\cite{abraham2010highway,abraham2010highway2,abraham2011vc, 
FeldmannFKP-highway-2015}. All of them, however, imply the existence of 
\emph{locally sparse shortest path covers}. Therefore, in this paper we consider 
this as a generalization of the original highway dimension definitions.

\begin{dfn}\label{dfn:spc}
Given a graph $G=(V,E)$ with edge lengths and a \emph{scale} $r\in\mathbb{R}^+$, 
let $\mc{P}_{(r,2r]}\subseteq 2^V$ contain all vertex sets given by shortest 
paths in $G$ of length more than $r$ and at most~$2r$. A \emph{shortest path 
cover} \mbox{$\spc(r)\subseteq V$} is a hitting set for the set system 
$\mc{P}_{(r,2r]}$, i.e., $P\cap \spc(r)\neq\emptyset$ for each 
$P\in\mc{P}_{(r,2r]}$. We call the vertices in $\spc(r)$ \emph{hubs}. A hub set 
$\spc(r)$ is called {\em locally $h$-sparse} if for every vertex $v\in V$ the 
ball $B_v(2r)$ of radius $2r$ around $v$ contains at most $h$ vertices from 
$\spc(r)$. The \emph{highway dimension} of $G$ is the smallest integer $h$ such 
that there is a locally $h$-sparse shortest path cover $\spc(r)$ for every scale 
$r\in\mathbb{R}^+$ in $G$.
\end{dfn}

\citet{abraham2010highway} introduced the highway dimension in order to explain 
the fast running times of various shortest-path heuristics. However, they also 
note that ``conceivably, better algorithms for other [optimization] problems can 
be developed and analysed under the small highway dimension assumption''. In 
this paper we investigate the \pname{$k$-Center} problem and focus on graphs 
with low highway dimension as a model for transportation networks. One advantage 
of using such graphs is that they do not only capture road networks but also 
networks with transportation links given by air-traffic or railroads. For 
instance, introducing connections due to airplane traffic will render a network 
non-planar, while it can still be argued to have low highway dimension: longer 
flight connections tend to be served by bigger but sparser airports, which act 
as hubs. This can, for instance, be of interest in applications where warehouses 
need to be placed to store and redistribute goods of globally operating 
enterprises. Unfortunately however, in this paper we show that the 
\pname{$k$-Center} problem also remains NP-hard on graphs with low highway 
dimension.

Two popular and well-studied ways of coping with NP-hard problems is to devise 
\emph{approximation}~\cite{Vazirani01book,williamson2011design} and 
\emph{parameterized}~\cite{downey2013fpt,pc-book} algorithms. For the former we 
demand polynomial running times but allow the computed solution to deviate from 
the optimum cost. That is, we compute a \emph{$c$-approximation}, which is a 
feasible solution with a cost that is at most $c$ times worse than the best 
possible for the given instance. A problem that allows a polynomial-time 
$c$-approximation for any input is \emph{$c$-approximable}, and $c$ is called 
the \emph{approximation factor} of the corresponding algorithm. The rationale 
behind parameterized algorithms is that some \emph{parameter} $p$ of the input 
is small and we can therefore afford running times that are super-polynomial in 
$p$, while, however, we demand optimum solutions. That is, we compute a solution 
with optimum cost in time $f(p)\cdot n^{O(1)}$ for some computable function 
$f(\cdot)$ that is independent of the input size $n$. A problem that has a 
fixed-parameter algorithm for a parameter $p$ is called \emph{fixed-parameter 
tractable~(FPT)} for $p$. What however, if a problem is neither approximable nor 
FPT? In this case it may be possible to overcome the complexity by combining 
these two paradigms. In particular, the objective becomes to develop 
\emph{fixed-parameter $c$-approximation} ($c$-FPA) algorithms that compute a 
$c$-approximation in time $f(p)\cdot n^{O(1)}$ for a parameter~$p$. 

The idea of combining the paradigms of approximation and fixed-parameter 
tractability has been suggested before. However, only few results are known for 
this setting (cf.~\cite{marx2008fpa}). In this paper we show that for the 
\pname{$k$-Center} problem it is possible to overcome lower bounds for its 
approximability and its fixed-parameter tractability using parameterized 
approximations. For many different input classes, such as planar 
graphs~\cite{plesnik1980planar-k-center}, and $L_1$- and 
$L_\infty$-metrics~\cite{feder1988metric-k-center}, the \pname{$k$-Center} 
problem is $2$-approximable via the algorithm for general metrics of 
\citet{hochbaum1986bottleneck}, but not $(2-\eps)$-approximable for 
any~$\eps>0$, unless P=NP. We show that, unless FPT=W[2], for general graphs 
there is no \mbox{$(2-\eps)$-FPA} algorithm for the parameter $k$. Additionally, 
we prove that, unless P=NP, \pname{$k$-Center} is not $(2-\eps)$-approximable on 
graphs with highway dimension $O(\log^2 n)$. This does not rule out 
$(2-\eps)$-FPA algorithms for the highway dimension parameter, and we leave this 
as an open problem. However, the result implies that if such an algorithm 
exists, then its running time must be enormous. In particular, unless the 
exponential time hypothesis (ETH)~\cite{eth,eth-2} fails, there can be no 
$(2-\eps)$-FPA algorithm with doubly exponential $2^{2^{o(\sqrt{h})}}\cdot 
n^{O(1)}$ running time in the highway dimension~$h$.

In face of these hardness results, it seems tough to beat the approximation 
factor of $2$ for \pname{$k$-Center}, even when considering fixed-parameter 
approximations for either the parameter $k$ or the highway dimension. Our main 
result, however, is that we can obtain a significantly better approximation 
factor for \pname{$k$-Center} when combining these two parameters. Such an 
algorithm is useful when aiming for high quality solutions, for instance, in a 
setting where only few warehouses should be built in a transportation network, 
since warehouses are expensive or stored goods should not be too dispersed for 
logistical reasons. 

It is known~\cite{abraham2011vc} that locally $O(h\log h)$-sparse shortest path 
covers can be computed for graphs of highway dimension $h$ in polynomial time, 
if each shortest path is unique. We will assume that the latter is always the 
case, since we can slightly perturb the edge lengths. In particular, using a 
folklore method we may distort distances such that any $3/2$-approximation in 
the perturbed instance also is a $3/2$-approximation in the original instance. 
In the following theorem summarizing our main result, the first given running 
time assumes approximate shortest path covers. In general it is NP-hard to 
compute the highway dimension~\cite{FeldmannFKP-highway-2015}, but it is unknown 
whether this problem is FPT. If this is the case and the running time is 
sufficiently small, this can be used as an oracle in our algorithm.

\begin{thm}\label{thm:alg}
For any graph $G$ with $n$ vertices and highway dimension $h$, there is an 
algorithm that computes a $3/2$-approximation to the \pname{$k$-Center} problem 
in time $2^{O(kh\log h)}\cdot n^{O(1)}$. If locally $h$-sparse shortest path 
covers are given by an oracle, the running time is $3^{kh}\cdot n^{O(1)}$.
\end{thm}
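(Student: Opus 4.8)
The plan is to reduce the problem to a bounded-size instance of a covering problem that we can solve by brute force, using the structure of shortest path covers to argue that only few ``candidate'' centers and few ``representative'' vertices matter. First I would guess the optimal radius $\opt$: since $\opt$ must equal $\dist_G(u,v)$ for some pair of vertices (any vertex is covered by a center at exactly its distance to that center when the radius is tight), there are only $n^{O(1)}$ candidate values, and we can try each one, keeping the smallest for which our procedure succeeds. So fix a guess $r$ for $\opt$; we want to decide whether $k$ balls of radius $\tfrac32 r$ suffice to cover $V$ (if $r=\opt$ they do, since $k$ balls of radius $r$ already cover $V$).

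The key structural step is to exploit the locally $h$-sparse shortest path cover at an appropriate scale. The idea: if we had an optimal solution $C$ with $|C|\le k$, then every vertex $v$ lies within distance $r$ of some center $c\in C$. I would look at the shortest path cover $\spc(r')$ for a scale $r'$ of order $r$ (the exact choice, likely $r'$ such that $2r'$ is comparable to $r$, is something to pin down). Any sufficiently long shortest path from a vertex to its center must be hit by a hub in $\spc(r')$; and since hubs are locally sparse, near any center there are only $O(h)$ (or $O(h\log h)$ using the approximate covers) relevant hubs. This should let me argue that, up to losing an additive $\tfrac12 r$ in the radius, I may assume each center is replaced by a nearby hub, so that it suffices to search over hub sets rather than arbitrary vertex sets. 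Combined with a branching/guessing argument over which of the $O(h)$ local hubs to pick for each of the $k$ centers, this yields roughly $3^{kh}\cdot n^{O(1)}$ configurations to check when the covers are given, and $2^{O(kh\log h)}\cdot n^{O(1)}$ when we must first compute the $O(h\log h)$-sparse covers via the algorithm of~\cite{abraham2011vc}. For each configuration, checking whether the chosen $k$ balls of radius $\tfrac32 r$ cover $V$ is a polynomial-time shortest-path computation.

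The main obstacle will be the structural argument that bridges an arbitrary optimal center to a hub while losing only an additive $\tfrac12 r$ in radius, and doing this simultaneously for all $k$ centers without the errors compounding. Concretely: a center may itself be far from every hub, so one cannot naively ``snap'' centers to hubs; instead I expect one has to argue about the vertices a center covers — if a center $c$ covers a vertex $u$ via a path long enough to contain a hub $s\in\spc(r')$, then $s$ is within $r+2r'$ of $u$ and within $2r'$ of $c$, which (for the right $r'$) lets the hub ball of radius $\tfrac32 r$ around $s$ absorb $u$. Vertices covered only by \emph{short} paths from $c$ need a separate treatment, possibly by observing that all such vertices lie in a small ball that is itself handled by picking $c$'s nearest hub, or by a more careful scale selection so that no ``short-path-only'' vertices exist. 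Making this case analysis tight enough to land exactly at factor $3/2$ — rather than $5/3$ or $2$ — is the delicate part, and I would expect it to drive the precise relationship between $r$, $r'$, and the sparsity bound, as well as the base $3$ in the $3^{kh}$ running time (one of three roles each hub can play relative to a center: the center's own snap target, a covering hub, or neither).
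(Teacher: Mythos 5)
Your outer skeleton matches the paper's: guess $\opt$ among the $O(n^2)$ pairwise distances, work with a shortest path cover at scale $\opt/2$, and use local sparsity (together with the observation that the $k$ optimum balls each contain at most $s$ hubs, so $|\spc(\opt/2)|\leq ks$) to bound the search, with $s=O(h\log h)$ from the polynomial-time cover construction or $s=h$ with an oracle. However, the central reduction you propose --- that ``it suffices to search over hub sets rather than arbitrary vertex sets'' because every optimum center can be replaced by a nearby hub at an additive loss of $\opt/2$ --- is false, and the obstacle you flag cannot be patched by a cleverer choice of scale. A vertex at distance more than $\opt/2$ from every hub of $\spc(\opt/2)$ need not have \emph{any} hub within a bounded multiple of $\opt$: for instance, a small-diameter group of vertices attached to the rest of the graph by a single edge of length $100\opt$ creates no vertex pairs at distance in $(\opt/2,\opt]$ near the group, so no hub is forced anywhere close, and ``picking $c$'s nearest hub'' loses an unbounded amount; even in benign cases the nearest hub can be at distance $\opt$ away, which only gives factor $2$, not $3/2$. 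No alternative scale $r'$ eliminates such vertices.

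The paper's proof is organized precisely around this obstruction. It first shows (\autoref{lem:clusters}) that the vertices far from $\spc(\opt/2)$ partition into clusters of diameter at most $\opt/2$ that are pairwise at distance more than $\opt$, so an optimum center inside a cluster covers no other cluster. The algorithm then builds three center sets: $C_1$ is a guessed hub set $H$ covering the \emph{non-cluster} optimum centers (the only place where snapping to hubs is valid); $C_2$ takes one arbitrary vertex from each cluster not covered by $C_1$ (legitimate because each such cluster must itself contain an optimum center and has diameter $\leq\opt/2$); and $C_3$ is obtained by guessing the set $H'$ of hubs covered by the remaining cluster-bound optimum centers and solving a \pname{Set Cover} instance whose sets are $B_v(\opt)\cap\spc(\opt/2)$ over cluster vertices $v$, via the $2^{|U|}\cdot n^{O(1)}$ dynamic program, which guarantees $|C_3|\leq|C^*_3|$ --- this step is what keeps the total number of centers at most $k$, and nothing in your sketch plays its role. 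The base $3$ also arises differently than you suggest: not from choosing one of $O(h)$ hubs per center, but from assigning each of the at most $ks$ hubs one of three roles ($H$, $H'$, or neither), giving $3^{ks}$ guesses. Without the cluster decomposition, the per-cluster centers, and the \pname{Set Cover} step, the proposal does not yield a feasible ($\leq k$ centers) solution at factor $3/2$.
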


We leave open whether approximation factors better than $3/2$ can be obtained 
for the combined parameter~$(k,h)$. It was recently proved~\cite{k-center-hard} 
that \pname{$k$-Center} is W[1]-hard for this parameter~$(k,h)$, but no 
inapproximability is implied by this result. We note that a recent result by 
\citet{hd-Becker2017} obtains a \emph{fixed-parameter approximation scheme} for 
\pname{$k$-Center} in low highway dimension graphs, i.e., an algorithm computing 
a $(1+\eps)$-approximation in time $f(k,h,\eps)\cdot n^{O(1)}$ for any $\eps>0$. 
However, this result needs a more restrictive definition of the highway 
dimension than used in this paper. In particular, there are graphs that have 
bounded highway dimension due to \autoref{dfn:spc}, but for which the algorithm 
by \citet{hd-Becker2017} is not applicable (for a more detailed discussion on 
the relation between different definitions of highway dimension we refer 
to~\cite[Section~9]{FeldmannFKP-highway-2015}). Although we also leave open 
whether 
\mbox{$(2-\eps)$-FPA} algorithms exist for the parameter $h$ alone, we are able 
to prove that the techniques we use for \autoref{thm:alg} cannot omit using both 
$k$ and $h$ as parameters. To obtain a $(2-\eps)$-FPA algorithm with running 
time $f(h)\cdot n^{O(1)}$ for any function $f(\cdot)$ independent of~$n$, a lot 
more information of the input would need to be exploited than the algorithm of 
\autoref{thm:alg} does. To explain this, we now turn to the used techniques.

\subsection{Used techniques}

\begin{wrapfigure}[13]{R}{0.36\textwidth}
\vspace{-3mm}
\centering{\includegraphics[width=0.35\textwidth]{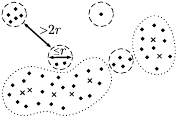}}
\caption{\label{fig:clusters} Clusters (dashed circles) are far from hubs 
(crosses). They have small diameter and are far from each other.} 
\end{wrapfigure}
A crucial observation for our algorithm is that at any scale $r$, a graph of low 
highway dimension is structured in the following way (see 
\autoref{fig:clusters}). We will prove that the vertices are either at distance 
at most $r$ from some hub, or they lie in clusters of diameter at most $r$ that 
are at distance more than $2r$ from each other. Hence, for the cost $\opt$ of 
the optimum \pname{$k$-Center} solution, at scale $r=\opt/2$ a center that 
resides in a cluster cannot cover any vertices of some other cluster. In this 
sense the clusters are ``independent'' of each other. At the same time we are 
able to bound the number of hubs of scale $\opt/2$ in terms of $k$ and the 
highway dimension. Roughly, this is comparable to graphs with small vertex 
cover, since the vertices that are not part of a vertex cover form an 
independent set. In this sense the highway dimension is a generalization of the 
vertex cover number (this is in fact the reason why computing the highway 
dimension is NP-hard~\cite{FeldmannFKP-highway-2015}).

At the same time the \pname{$k$-Center} problem is a generalization of the 
\pname{Dominating Set} problem. This problem is 
\mbox{W[2]-hard~\cite{downey2013fpt}}, which, as we will show, is also why 
\pname{$k$-Center} is W[2]-hard to approximate for parameter~$k$. However, 
\pname{Dominating Set} is FPT using the vertex cover number as the 
parameter~\cite{alber2002dom-planar}. This is one of the reasons why combining 
the two parameters $k$ and $h$ yields a $3/2$-FPA algorithm for 
\pname{$k$-Center}. In fact the similarity seems so striking at first that one 
is tempted to reduce the problem of finding a $3/2$-approximation for 
\pname{$k$-Center} on low highway dimension graphs to solving \pname{Dominating 
Set} on a graph of low vertex cover number. However, it is unclear how this can 
be made to work. Instead we devise an involved algorithm that is driven by the 
intuition that the two problems are similar. 

The algorithm will guess the cost $\opt$ of the optimum solution in order to 
exploit the structure of the graph given by the locally $h$-sparse shortest 
path 
cover for scale $r=\opt/2$. In particular, the shortest path covers of other 
scales do not need to be locally sparse in order for the algorithm to succeed. 
We will show that there are graphs for which \pname{$k$-Center} is not 
$(2-\eps)$-approximable, unless P=NP, and for which the shortest path cover for 
scale $\opt/2$ is locally $46$-sparse. Hence our techniques, which only 
consider the shortest path cover of scale $\opt/2$, cannot yield a 
$(2-\eps)$-FPA algorithm for parameter $h$. The catch is though that the 
reduction produces graphs which do not have locally sparse shortest path covers 
for scales significantly larger than~$\opt/2$. Hence a $(2-\eps)$-FPA algorithm 
for parameter $h$ might still exist. However, such an algorithm would have to 
take larger scales into account than just $\opt/2$, and as mentioned above, it 
would have to have at least doubly exponential running time in $h$.

Proving that no $(2-\eps)$-FPA algorithm for parameter $k$ exists for 
\pname{$k$-Center}, unless FPT=W[2], is straightforward given the original 
reduction of~\citet{hsu1979k-center-hard} from the W[2]-hard \pname{Dominating 
Set} problem. For parameter~$h$, however, we develop some more advanced 
techniques. For the reduction we show how to construct a graph of low highway 
dimension given a metric of low \emph{doubling dimension} (see 
\autoref{sec:hard} for a formal definition), so that distances between vertices 
are preserved by a $(1+\eps)$ factor. The doubling 
dimension~\cite{gupta2003doubling} is a parameter that captures the bounded 
volume growth of metrics, such as given by Euclidean and Manhattan distances. 
Since \pname{$k$-Center} is not $(2-\eps)$-approximable in 
$L_\infty$-metrics~\cite{feder1988metric-k-center}, unless P=NP, and these have 
constant doubling dimension, we are able to conclude that the hardness 
translates to graphs of highway dimension~$O(\log^2 n)$.

\subsection{Generalizations}

In addition to \pname{$k$-Center}, in \autoref{sec:general} we obtain similar 
positive results for two generalizations of the problem by appropriately 
modifying our techniques. For the \pname{weighted $k$-Center} problem, the 
vertices have integer weights and the objective is to choose centers of total 
weight at most $k$ to cover all vertices with balls of minimum radius. This 
problem is $3$-approximable~\cite{hochbaum1986bottleneck,Vazirani01book} and no 
better approximation factor is known. However, we are able to modify our 
techniques to obtain a $2$-FPA algorithm for the combined parameter~$(k,h)$.

An alternative way to define the \pname{$k$-Center} problem is in terms of 
finding a \emph{star cover} of size $k$ in a metric, where the cost of the 
solution is the longest of any star edge in the solution. More generally, in 
their seminal work \citet{hochbaum1986bottleneck} defined the 
\pname{$(k,\mc{F})$-Partition} problem. Here a family of (unweighted) graphs 
$\mc{F}$ is given and the aim is to partition the vertices of a metric into $k$ 
sets and connect the vertices of each set by a graph from the family~$\mc{F}$. 
The solution cost is measured by the ``bottleneck'', which is the longest 
distance between any two vertices of the metric that are connected by an edge in 
a graph from the family~$\mc{F}$. The case when $\mc{F}$ contains only stars is 
exactly the \pname{$k$-Center} problem, given the shortest-path metric as input. 
The \pname{$(k,\mc{F})$-Partition} problem is 
$2d$-approximable~\cite{hochbaum1986bottleneck}, where $d$ is the largest 
diameter of any graph in $\mc{F}$. We show that a $3\delta$-FPA algorithm for 
the combined parameter $(k,h)$ exists, where $\delta$ is the largest radius of 
any graph in $\mc{F}$. Hence for graph families in which $3\delta<2d$ this 
improves on the general algorithm by \citet{hochbaum1986bottleneck}. This is for 
example the case when $\mc{F}$ contains ``stars of paths'', i.e., stars for 
which each edge is replaced by a path of length at most $\delta$. The diameter 
of such a graph can be $2\delta$, while the radius is at most $\delta$, and 
hence~$3\delta<2d=4\delta$.

\subsection{Related work}

Given its applicability to various problems in transportation networks, but also 
in other contexts such as image processing and data-compression, the 
\pname{$k$-Center} problem has been extensively studied in the past. We only 
mention closely related results here, that were not mentioned before. For 
parameters clique-width and tree-width, \citet{katsikarelis2017structural} show 
that \pname{$k$-Center} is W[1]-hard, but they also give fixed-parameter 
approximation schemes for each of these parameters. For the tree-depth 
parameter, they show that the problem is FPT. For unweighted planar and map 
graphs the \pname{$k$-Center} problem is FPT~\cite{demaine2005planar-center} for 
the combined parameter~$(k,\opt)$, where $\opt$ is the cost of the optimum 
solution. Note though that $k$ and $\opt$ are somewhat opposing parameters in 
the sense that typically if $k$ is small then $\opt$ will be large, and vice 
versa. A very recent result~\cite{planar-EPTAS} gives an \emph{efficient} 
polynomial-time approximation scheme (EPTAS) for \pname{$k$-Center} on weighted 
planar graphs, which approximates both the optimum cost $\rho$ and the number of 
centers~$k$. That is, in time $f(\eps)\cdot n^{O(1)}$ the algorithm computes a 
$(1+\eps)$-approximation that uses at most $(1+\eps)k$ centers, for any 
$\eps>0$. Interestingly, this immediately implies a fixed-parameter 
approximation scheme for parameters $k$ and $\eps$ on weighted planar graphs: 
setting $\eps$ to a value smaller than $1/k$ forces the algorithm to compute a 
solution with at most $k$ centers (since $k$ is an integer), while the cost is 
within an $(1+\eps)$-factor of the optimum. \citet{marx2015optimal} prove that  
in planar graphs an optimum \pname{$k$-Center} solution can be computed in time 
$n^{O(\sqrt{k})}$. On the other hand, a recent result~\cite{k-center-hard} shows 
that \pname{$k$-Center} is W[1]-hard in planar graphs with constant doubling 
dimension, for the combined parameter $(k,h,t)$, where $h$ is the highway 
dimension and $t$ the treewidth of the input graph. Thus this problem remains 
hard, even when assuming that it abides to all aforementioned models of 
transportation networks at once. For any $L_q$ metric an $(1+\eps)$-FPA 
algorithm for the combined parameter $(k,\eps,D)$ can be 
obtained~\cite{agarwal2002clustering}, where $D$ is the dimension of the 
geometric space. This can also be generalized~\cite{k-center-hard} to an 
$(1+\eps)$-FPA algorithm for the combined parameter $(k,\eps,d)$, where $d$ is 
the doubling dimension.

\citet{abraham2010highway} introduced the highway dimension, and study it in 
several papers~\cite{abraham2010highway,abraham2010highway2,abraham2011vc}. 
Their main interest is in explaining the good performance of various 
shortest-path heuristics assuming low highway dimension. In~\cite{abraham2011vc} 
they show that a locally $O(h\log h)$-sparse shortest path cover can be computed 
in polynomial time for any scale if the highway dimension of the input graph is 
$h$, and each shortest path is unique. 
\citet{FeldmannFKP-highway-2015} consider computing 
approximations for various other problems that naturally arise in transportation 
networks. They show that quasi-polynomial time approximation schemes can be 
obtained for problems such as \pname{Travelling Salesman, Steiner Tree}, or 
\pname{Facility Location}, if the highway dimension is constant. For this 
however a more restrictive definition of the highway dimension than used here is 
needed (see~\cite[Section~9]{FeldmannFKP-highway-2015} for more details). The 
algorithms are obtained by probabilistically embedding a low highway dimension 
graph into a bounded treewidth graph while introducing arbitrarily small 
distortions of distances. Known algorithms to compute optimum solutions on low 
treewidth graphs then imply the approximation schemes. It is interesting to note 
that this approach does not work for the \pname{$k$-Center} problem since, in 
contrast to the above mentioned problems, its objective function is not linear 
in the edge lengths. As noted before however, a recent result by 
\citet{hd-Becker2017} obtains a fixed-parameter approximation scheme for 
\pname{$k$-Center} for combined parameter $(h,k,\eps)$ using a deterministic 
embedding, building on the results in~\cite{FeldmannFKP-highway-2015}. But 
again, for this the more restrictive definition of highway dimension also used 
in~\cite{FeldmannFKP-highway-2015} is needed. The only other theoretical results 
on the highway dimension that we are aware of at this point are by 
\citet{bauer2013search} and by \citet{kosowski2017beyond}. 
\citet{bauer2013search} show that for any graph $G$ there exist edge lengths 
such that the highway dimension is $\Omega(\mathrm{pw}(G)/\log n)$, where 
$\mathrm{pw}(G)$ is the pathwidth of~$G$. \citet{kosowski2017beyond} introduce 
the \emph{skeleton dimension} of a graph and compare it to the highway dimension 
in the context of shortest path heuristics.

\section{{\mdseries\pname{$k$-Center}} and highway dimension versus
{\mdseries\pname{Dominating Set}} and vertex covers}\label{sec:hd-vs-vc}

We begin by observing that the vertices of a low highway dimension graph are 
highly structured for any scale~$r$: the vertices that are far from any hub of a 
shortest path cover for scale $r$ are clustered into sets of small diameter and 
large inter-cluster distance (see \autoref{fig:clusters}). A similar observation 
was already made in~\cite{FeldmannFKP-highway-2015}, where clusters were called 
\emph{towns}. We need a slightly different definition of clusters than 
in~\cite{FeldmannFKP-highway-2015} however, which is why we do not use the 
same terminology here. For a set $S\subseteq V$ let $\dist_G(u,S)=\min_{v\in 
S}\dist_G(u,v)$ be the shortest-path distance from $u$ to the closest vertex in 
$S$.

\begin{dfn}\label{dfn:clusters}
Fix $r\in\mathbb{R}^+$ and a shortest path cover $\spc(r)\subseteq V$ for scale 
$r$ in a graph $G=(V,E)$. We call an inclusion-wise maximal set $T\subseteq 
\{v\in V\mid \dist_G(v,\spc(r))>r\}$ with $\dist_G(u,w)\leq r$ for all $u,w\in 
T$ a \emph{cluster}, and we denote the set of all clusters by $\mc{T}$. The 
\emph{non-cluster} vertices are those which are not contained in any cluster of 
$\mc{T}$.
\end{dfn}

Note that the set $\mc{T}$ is specific for the scale $r$ and the hub set 
$\spc(r)$. The following lemma summarizes the structure of the clusters and 
non-cluster vertices. Here we let $\dist_G(S,S')=\min_{v\in S}\dist_G(v,S')$ be 
the minimum distance between vertices of two sets $S$ and $S'$.

\begin{lem}\label{lem:clusters}
Let $\mc{T}$ be the cluster set for a scale $r$ and a shortest path cover 
$\spc(r)$. For each non-cluster vertex $v$, $\dist_G(v,\spc(r))\leq r$. The 
diameter of any cluster $T\in\mc{T}$ is at most~$r$, and $\dist_G(T,T')>2r$ for 
any distinct pair of clusters $T,T'\in\mc{T}$.
\end{lem}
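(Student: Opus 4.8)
The plan is to verify the three claimed properties in turn, working directly from Definitions~\ref{dfn:spc} and~\ref{dfn:clusters}. The first property is essentially a restatement: by Definition~\ref{dfn:clusters} a vertex $v$ is a non-cluster vertex precisely when it belongs to no maximal set $T$ of the stated form. I would argue the contrapositive — if $\dist_G(v,\spc(r))>r$, then $v$ is covered by some cluster. Indeed, the singleton $\{v\}$ already satisfies the two defining conditions (it is a subset of $\{u\mid\dist_G(u,\spc(r))>r\}$ and trivially has all pairwise distances $\le r$), so it is contained in some inclusion-wise maximal such set, i.e. in a cluster. Hence every non-cluster vertex must have $\dist_G(v,\spc(r))\le r$.

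For the diameter bound, let $T\in\mc{T}$ and take any $u,w\in T$. By the definition of a cluster, $\dist_G(u,w)\le r$, so the diameter $\max_{u,w\in T}\dist_G(u,w)$ is at most $r$ immediately. (There is nothing to prove here beyond unpacking the definition; I include it for completeness.)

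The real content is the inter-cluster separation $\dist_G(T,T')>2r$ for distinct $T,T'\in\mc{T}$, and this is the step I expect to be the main obstacle. Suppose for contradiction that $\dist_G(T,T')\le 2r$ for two distinct clusters, witnessed by $u\in T$ and $w\in T'$ with $\dist_G(u,w)\le 2r$. I would first dispose of the case $\dist_G(u,w)\le r$: then $T\cup T'$ still has all pairwise distances bounded — wait, this needs care, since a vertex of $T$ and a vertex of $T'$ need not be within $r$ of each other just because $u,w$ are. So instead the key idea is to look at the shortest $u$--$w$ path $P$. If $\dist_G(u,w)\le r$, consider whether $P$ forces $T=T'$: actually the cleaner route is to use the shortest path cover. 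When $r<\dist_G(u,w)\le 2r$, the path $P$ has length in $(r,2r]$, so $P\in\mc{P}_{(r,2r]}$ and therefore $P$ contains a hub $x\in\spc(r)$. Since $x$ lies on a shortest path of length at most $2r$ from $u$, we get $\dist_G(u,x)\le 2r$ and likewise $\dist_G(w,x)\le 2r$; but more usefully, $x$ lies \emph{between} $u$ and $w$ on $P$, so $\dist_G(u,x)+\dist_G(x,w)=\dist_G(u,w)\le 2r$, forcing $\min\{\dist_G(u,x),\dist_G(x,w)\}\le r$. Thus either $u$ or $w$ is within distance $r$ of a hub, contradicting that $u\in T$ and $w\in T'$ are cluster vertices (which by Definition~\ref{dfn:clusters} satisfy $\dist_G(\cdot,\spc(r))>r$). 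The remaining case $\dist_G(u,w)\le r$ must be handled separately: here I would show $T$ and $T'$ cannot actually be distinct. If every pair $v\in T$, $v'\in T'$ satisfied $\dist_G(v,v')\le r$ we would contradict maximality (as $T\cup T'$ would be a strictly larger set of the required form containing $T$), so some pair $v\in T$, $v'\in T'$ has $\dist_G(v,v')>r$; then by the shortest path cover argument applied to the $v$--$v'$ path (whose length is in $(r, \dist_G(v,u)+\dist_G(u,w)+\dist_G(w,v')] \subseteq (r,3r]$, and if it is in $(r,2r]$ we are done as above; the length could a priori exceed $2r$) — and this is exactly where the argument needs to be set up carefully, possibly by choosing $v,v'$ to be a \emph{closest} pair with $\dist_G(v,v')>r$ so that its length lands in $(r,2r]$. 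I would therefore organize the separation proof around a minimal-distance witness pair between the two clusters, apply the hitting-set property of $\spc(r)$ to the corresponding shortest path, and derive the contradiction with the defining inequality $\dist_G(\cdot,\spc(r))>r$ for cluster vertices.
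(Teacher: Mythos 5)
Your first two claims and the main case of the separation argument are fine, and the key mechanism (a shortest path of length in $(r,2r]$ between two vertices that are both at distance more than $r$ from $\spc(r)$ would have to contain a hub, which is impossible) is exactly the paper's central observation. The genuine gap is the case $\dist_G(u,w)\le r$ with $u\in T$, $w\in T'$, which you explicitly leave unfinished. Your proposed fix -- take a \emph{closest} cross pair $v\in T$, $v'\in T'$ with $\dist_G(v,v')>r$ and claim its distance lands in $(r,2r]$ -- is not justified as stated: minimality among pairs with distance exceeding $r$ does not by itself force the distance below $2r$; to know that, you must first exhibit \emph{some} pair of cluster vertices at distance in $(r,2r]$, which is precisely the missing step. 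The case can be closed with a short chain argument you did not give: with $\dist_G(v,u)\le r$, $\dist_G(u,w)\le r$, $\dist_G(w,v')\le r$ and $\dist_G(v,v')>r$, either $\dist_G(v,w)>r$, in which case $\dist_G(v,w)\le\dist_G(v,u)+\dist_G(u,w)\le 2r$ gives a forbidden pair $(v,w)$, or $\dist_G(v,w)\le r$, in which case $\dist_G(v,v')\le\dist_G(v,w)+\dist_G(w,v')\le 2r$ makes $(v,v')$ itself the forbidden pair; either way the hub-hitting argument applies and yields the contradiction.

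It is worth contrasting this with how the paper avoids the case split altogether. It first proves the hub-hitting fact globally: no two vertices of $W=\{v\mid \dist_G(v,\spc(r))>r\}$ can be at distance in $(r,2r]$. From this it deduces that the relation ``distance at most $r$'' is transitive on $W$, hence an equivalence relation, and that the clusters of \autoref{dfn:clusters} are exactly its equivalence classes. Then two distinct clusters cannot contain vertices at distance at most $r$ (they would be in the same class), and any cross distance exceeding $r$ must exceed $2r$ because the range $(r,2r]$ is excluded. This packaging handles your troublesome subcase for free and is the structure I would recommend adopting; with either that restructuring or the chain argument above, your proof is complete.
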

\begin{proof}
The first two claims follow immediately from the definition of the clusters. 
For the third claim let $W=\{v\in V\mid \dist_G(v,\spc(r))>r\}$, such that any 
cluster $T\in\mc{T}$ is a subset of $W$. We first argue that there are no 
vertices $u,w\in W$ for which $\dist_G(u,w)\in (r,2r]$. If these existed, by 
\autoref{dfn:spc} there would be a hub $x\in\spc(r)$ hitting the shortest path 
between them. However, this path would have length 
$\dist_G(u,x)+\dist_G(w,x)>2r$ since $u$ and~$w$ are at distance more than~$r$ 
from $\spc(r)$, contradicting our assumption that $\dist_G(u,w)\leq 2r$. 

As a consequence, for any three vertices $u,v,w\in W$ with $\dist_G(u,v)\leq r$ 
and $\dist_G(v,w)\leq r$ we have $\dist_G(u,w)\leq\dist_G(u,v)+\dist_G(v,w)\leq 
2r$, and since we know that $\dist_G(u,w)\notin (r,2r]$, this implies that in 
fact $\dist_G(u,w)\leq r$. Hence the relation of being at distance at most $r$ 
in $W$ is transitive, and it is obviously also symmetric and reflexive, i.e., it 
is an equivalence relation on $W$. Moreover, any two vertices $u,w\in W$ that do 
not belong to the same equivalence class, i.e.~$\dist_G(u,w)>r$, must be at 
distance more than $2r$, as $\dist_G(u,w)\notin (r,2r]$. By 
\autoref{dfn:clusters} the clusters are exactly the equivalence classes of~$W$, 
and so $\dist_G(T,T')> 2r$ for any two distinct clusters $T,T'\in\mc{T}$.
\end{proof}

A \emph{vertex cover} $W$ is a subset of vertices such that every edge is 
incident to some vertex of $W$. In particular, if all edges have unit length, 
then a shortest path cover for scale $r=1/2$ is a vertex cover. Hence shortest 
path covers are generalizations of vertex covers.
A \emph{dominating set} $D$ is a subset of vertices such that every vertex is 
adjacent to some vertex of $D$. In a graph with unit edge lengths, a feasible 
\pname{$k$-Center} solution of cost~$1$ is a dominating set. In this sense the 
\pname{$k$-Center} problem is a generalization of the \pname{Dominating Set} 
problem, for which a dominating set of minimum size needs to be found. The 
\pname{Dominating Set} problem is $W[2]$-hard~\cite{downey2013fpt} for its 
canonical parameter (i.e., the size of the optimum dominating set), but it is 
FPT~\cite{alber2002dom-planar} for the parameter given by the \emph{vertex cover 
number}, which is the size of the smallest vertex cover of a given graph. As the 
following simple lemma shows, if $\opt$ is the cost of the optimum 
\pname{$k$-Center} solution, the number of hubs of the shortest path cover 
$\spc(\opt/2)$ is bounded in $k$ and the local sparsity of $\spc(\opt/2)$. Thus 
our setting generalizes the \pname{Dominating Set} problem on graphs with 
bounded vertex cover number. It is interesting to note that in contrast to the 
\pname{Dominating Set} problem being FPT for the vertex cover 
number~\cite{alber2002dom-planar}, our more general setting is 
W[1]-hard~\cite{k-center-hard}.

\begin{lem}\label{lem:bounded-hubs}
Let $\opt$ be the optimum cost of the \pname{$k$-Center} problem in a given 
instance $G$. If a shortest path cover $\spc(\opt/2)$ of $G$ for scale $\opt/2$ 
is locally $s$-sparse, then $|\spc(\opt/2)|\leq ks$.
\end{lem}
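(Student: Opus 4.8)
The plan is to exploit an optimal solution together with the local sparsity assumption by a simple double-counting argument. Let $C\subseteq V$ be a feasible \pname{$k$-Center} solution of cost $\opt$, so that $|C|\leq k$ and $\bigcup_{c\in C}B_c(\opt)=V$. Write $r=\opt/2$ for brevity, so that the covering radius is $\opt=2r$ — exactly the radius appearing in the definition of local sparsity.

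First I would observe that every hub is within distance $2r$ of some center. Since the balls $B_c(\opt)$ with $c\in C$ cover all of $V$, in particular each hub $x\in\spc(r)$ lies in some ball $B_c(\opt)=B_c(2r)$ with $c\in C$. Hence $\spc(r)\subseteq\bigcup_{c\in C}B_c(2r)$.

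Next I would invoke the local $s$-sparsity of $\spc(r)$: applying \autoref{dfn:spc} (with $h$ replaced by $s$ at scale $r$) to each center $c\in C$ in the role of the vertex~$v$, the ball $B_c(2r)$ contains at most $s$ hubs, i.e.\ $|B_c(2r)\cap\spc(r)|\leq s$. Combining this with the covering from the previous paragraph and a union bound yields
\[
|\spc(r)| \;\leq\; \sum_{c\in C}\bigl|B_c(2r)\cap\spc(r)\bigr| \;\leq\; |C|\cdot s \;\leq\; ks,
\]
as claimed.

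There is essentially no obstacle here; the only point worth stating carefully is that the scale is chosen as $r=\opt/2$ precisely so that the covering radius $\opt$ equals $2r$, matching the radius in the local-sparsity definition. This is what guarantees that each of the at most $k$ centers accounts for at most $s$ hubs, and everything else is a one-line counting argument.
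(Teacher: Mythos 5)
Your proof is correct and is exactly the argument the paper uses: the optimum solution covers $V$ with at most $k$ balls of radius $\opt=2r$, and local $s$-sparsity at scale $r=\opt/2$ bounds the number of hubs in each such ball by $s$, giving $|\spc(\opt/2)|\leq ks$. You have merely spelled out the union-bound step that the paper leaves implicit.
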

\begin{proof}
The optimum \pname{$k$-Center} solution covers the whole graph $G$ with $k$ 
balls of radius $\opt$ each. By \autoref{dfn:spc} there are at most $s$ hubs of 
$\spc(\opt/2)$ in each ball.
\end{proof}

We are able to exploit this intuition for our algorithm in \autoref{sec:alg}. On 
a high level, our algorithm follows the lines of the following simple procedure 
to solve \pname{Dominating Set} on graphs with bounded vertex cover number. As a 
subroutine we will solve an instance of the \pname{Set Cover} problem, for which 
a collection $\mc{S}\subseteq 2^U$ of subsets of a universe~$U$ is given 
together with a subset $U'\subseteq U$ of the universe.\footnote{Usually $U'=U$ 
but for convenience we define the problem slightly more general here.} A 
\emph{set cover} for $U'$ is a collection $\mc{S'}\subseteq\mc{S}$ of the sets 
in $\mc{S}$ covering $U'$, i.e., $\bigcup_{S\in\mc{S'}} S\supseteq U'$. The aim 
is to compute a minimum-sized set cover for the set system $(U',\mc{S})$. Given 
an input graph $G=(V,E)$ and a vertex cover $W\subseteq V$ of small size (which 
can, for instance, be an approximation), we perform the following three steps, 
in each of which we find a respective subset~$D_i$, $i\in\{1,2,3\}$, of the 
optimum dominating set $D\subseteq V$ of $G$. 
\begin{enumerate}
\item Guess the subset $D_1=W\cap D$ of vertices in the vertex cover $W$ that 
belong to the dominating set~$D$.
\item Since the vertices not in the vertex cover $W$ form an independent set, 
any vertex of $V\setminus W$, which is not adjacent to a vertex in $D_1$ must be 
in $D$. 
Thus we can let $D_2$ consist of all such vertices from~$V\setminus W$.
\item By our choice of $D_2$, if there are any vertices left in $V$ that are not 
adjacent to $D_1\cup D_2$, they must be in $W$. Furthermore these vertices must 
be adjacent to some vertices in $D$ contained in $V\setminus W$, by our choice 
of $D_1$. We can thus solve an instance of \pname{Set Cover}, where $U'$ is 
given by the subset of vertices in $W$ that are not adjacent to $D_1\cup D_2$, 
and the set system~$\mc{S}$ is given by the neighbourhoods of vertices in 
$V\setminus W$ restricted to $W$. The remaining set $D_3=D\setminus (D_1\cup 
D_2)$ consists of the vertices in $V\setminus W$ whose neighbourhoods form the 
smallest solution of this \pname{Set Cover} instance.
\end{enumerate}

For the first step of the above algorithm there are $2^{|W|}$ possible guesses 
for $D_1$. For each such guess, the second step can be performed in polynomial 
time. For the third step we need to solve \pname{Set Cover} for an instance with 
a small universe $U$. This can be done in $2^{|U|}\cdot (|U|+|\mc{S}|)^{O(1)}$ 
time using the algorithm of \citet{fomin2005exact}. Since in our case $U=W$ and 
$|\mc{S}|\leq |V\setminus W|$, this amounts to a running time of 
$2^{|W|}\cdot n^{O(1)}$. This \pname{Set Cover} algorithm is based on dynamic 
programming. During its execution the smallest set cover for every subset $U'$ 
of the universe $U$ is computed, and these optimum solutions are stored in a 
table. Therefore, instead of running an algorithm for \pname{Set Cover} for each 
guess of~$D_1$ in the third step above, we may run the algorithm of 
\citet{fomin2005exact} only once beforehand: we set the universe to all of $W$, 
and the set system will contain all neighbourhood sets of vertices in 
$V\setminus W$. This way the needed optimum solution for the corresponding 
subset $U'$ of~$U$ can be retrieved in constant time in the third step of our 
procedure. As we need to retrieve an optimum set cover for every guess of $D_1$, 
this improves the overall running time, which is now~$2^{|W|}\cdot n^{O(1)}$.

In our \pname{$k$-Center} algorithm we will use the same method of pre-computing 
a table containing all optimum \pname{Set Cover} solutions for subsets of a 
universe. We summarize the properties of the needed \pname{Set Cover} algorithm 
in the following.

\begin{thm}[\cite{fomin2005exact,pc-book}]\label{thm:set-cover-dp}
Given a set system $(U,\mc{S})$ we can compute a table $\mathbb{T}$, which for 
any subset $U'\subseteq U$ contains the smallest set cover for $(U',\mc{S})$ 
in the entry $\mathbb{T}[U']$. For any subset $U'\subseteq U$, the optimum set 
cover for $U'$ can be retrieved in constant time from $\mathbb{T}$, and 
$\mathbb{T}$ can be computed in $2^{|U|}\cdot (|U|+|\mc{S}|)^{O(1)}$ time.
\end{thm}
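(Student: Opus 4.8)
The plan is to use a textbook dynamic program over subsets of the universe, processing the sets of $\mc{S}$ one at a time. Fix an arbitrary ordering $S_1,\dots,S_m$ of the sets in $\mc{S}$, where $m=|\mc{S}|$, and represent every subset of $U$ as a bit vector of length $|U|$, so that unions, differences and membership tests take $O(|U|)$ time. For $0\le i\le m$ and every $U'\subseteq U$, let $D[i][U']$ be the minimum number of sets among $S_1,\dots,S_i$ whose union contains $U'$, and set $D[i][U']=\infty$ when no such subcollection exists.

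First I would initialise the base layer by $D[0][\emptyset]=0$ and $D[0][U']=\infty$ for every nonempty $U'\subseteq U$. Then I would fill the table layer by layer, using the recurrence
\[D[i][U']=\min\bigl(D[i-1][U'],\ 1+D[i-1][U'\setminus S_i]\bigr),\]
which distinguishes whether $S_i$ is used in an optimal subcollection for $\bigl(U',\{S_1,\dots,S_i\}\bigr)$ or not; correctness is immediate by induction on $i$. Once the last layer is computed I set $\mathbb{T}[U']:=D[m][U']$, which by construction equals the size of the smallest set cover of $(U',\mc{S})$, and is $\infty$ precisely when $(U',\mc{S})$ has no set cover. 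To make the optimum cover itself (and not only its cardinality) available, I would store with every finite entry $D[i][U']$ a single bit recording which branch of the recurrence attained the minimum; chasing these pointers back from $D[m][U']$ reconstructs an optimum cover, or, alternatively, one can keep an explicit optimum subcollection in each entry, copying it in $O(|U|)$ time whenever the second branch is taken.

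For the running time, the table has $(m+1)\cdot 2^{|U|}$ entries and each one is obtained from the previous layer in time polynomial in $|U|$, so building it (including the back-pointers or the explicit covers) takes $2^{|U|}\cdot(|U|+m)^{O(1)}$ time in total; it suffices to keep the last layer to answer all queries. Since each $U'\subseteq U$ is encoded as an integer in $\{0,\dots,2^{|U|}-1\}$, the value $\mathbb{T}[U']$, together with a pointer to a corresponding optimum cover, is returned by a single array lookup, i.e.\ in constant time. There is essentially no obstacle in this proof; the only point needing a little attention is precisely this last one, namely arranging the stored information so that the optimum cover --- not just its size --- can be reported in constant time, which the back-pointer (or explicit-cover) bookkeeping handles without affecting the stated bound.
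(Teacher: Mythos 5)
Your proof is correct and is essentially the same argument underlying the result the paper merely cites: the standard dynamic program over subsets of the universe, processing the sets of $\mc{S}$ one by one with the recurrence $D[i][U']=\min(D[i-1][U'],\,1+D[i-1][U'\setminus S_i])$, which is exactly the table-filling algorithm of the cited references. The only small wrinkle, that keeping just the last layer is incompatible with back-pointer reconstruction, is harmless, since your alternative of storing an explicit optimum cover with each entry already gives constant-time retrieval within the stated $2^{|U|}\cdot(|U|+|\mc{S}|)^{O(1)}$ bound.
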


\section{The fixed-parameter approximation algorithm}\label{sec:alg}

We begin with a brief high-level description of the algorithm. As observed in 
\autoref{sec:hd-vs-vc}, we can think of solving \pname{$k$-Center} in a low 
highway dimension graph as a generalization of solving \pname{Dominating Set} in 
a graph with bounded vertex cover number. Our algorithm (see \autoref{alg_main}) 
is driven by this intuition. After guessing the optimum \pname{$k$-Center} 
cost~$\opt$ and computing $\spc(\opt/2)$ together with its cluster set~$\mc{T}$, 
we will see how the algorithm computes three approximate center sets $C_1$, 
$C_2$, and~$C_3$ (analogous to the three respective sets $D_1$, $D_2$, $D_3$ 
for \pname{Dominating Set}). For the first set~$C_1$ the algorithm guesses a 
subset of the hubs of $\spc(\opt/2)$ that are close to the optimum center set. 
This can be done in time exponential in $k$ and the local sparsity of the hub 
set, because there are at most that many hubs for scale~$\opt/2$ by 
\autoref{lem:bounded-hubs}. We will observe that by \autoref{lem:clusters} an 
optimum center lying in a cluster cannot cover any vertices that are part of 
another cluster. This makes it easy to determine a second set $C_2$ of 
approximate centers, each of which will lie in a cluster that must contain an 
optimum center. The third set of centers~$C_3$ will consist of cluster vertices 
that cover the remaining vertices not yet covered by $C_1$ and $C_2$. These 
remaining uncovered vertices will all be non-cluster vertices, and we find $C_3$ 
by solving a \pname{Set Cover} instance, similar to the third step in our 
procedure for \pname{Dominating Set}.

More concretely, consider an input graph $G=(V,E)$ with an optimum 
\pname{$k$-Center} solution $C^*$ of cost~$\opt$. In \autoref{alg:guess-opt1} to 
\autoref{alg:guess-opt2} of \autoref{alg_main} we try scales~$r$ in increasing 
order, to guess the correct value for which~$r=\opt/2$. For each guessed value 
of $r$ the algorithm computes a shortest path cover $\spc(r)$ together with its 
cluster set $\mc{T}$ in \autoref{alg:compute-spc}. By~\cite{abraham2011vc}, 
locally $O(h\log h)$-sparse shortest path covers are computable in polynomial 
time if the input graph has highway dimension~$h$. In \autoref{alg:sparsity} we 
therefore set $s$ to the bound of the local sparsity guaranteed 
in~\cite{abraham2011vc} (if locally $h$-sparse shortest path covers are given by 
an oracle, we may at this point set $s=h$). In order to keep the running time 
low, the algorithm checks that the number of hubs is not too large in 
\autoref{alg:skip}: since by \autoref{lem:bounded-hubs} we have 
$|\spc(\opt/2)|\leq ks$, we can dismiss any shortest path cover containing more 
hubs.

Assume that $r=\opt/2$ was found. In the following, for an index $i\in\{1,2,3\}$ 
we denote by $R^*_i=\bigcup_{v\in C^*_i} B_v(\opt)$ and $R_i=\bigcup_{v\in C_i} 
B_v(\frac{3}{2}\opt)$ the regions covered by some set of optimum centers 
$C^*_i\subseteq C^*$ (with balls of radius $\opt$) and approximate centers 
$C_i\subseteq V$ (with balls of radius $\frac{3}{2}\opt$), respectively. In 
\autoref{alg:guess-hubs} the algorithm guesses a minimum-sized set $H$ of hubs 
in $\spc(\opt/2)$, such that the balls of radius $\opt/2$ around hubs in $H$ 
cover all optimum non-cluster centers. That is, if $C^*_1\subseteq C^*$ denotes 
the set of optimum non-cluster centers, each of which is at distance at most 
$\opt/2$ from some hub in $\spc(\opt/2)$, then $C^*_1 \subseteq\bigcup_{v\in H} 
B_v(\opt/2)$ and $H\subseteq\spc(\opt/2)$ is a minimum-sized such set. We 
choose 
this set of hubs $H$ as the first set of centers $C_1$ for our approximate 
solution in \autoref{alg:set-C1}. Note that due to the minimality of $H$ we have 
$|C_1|\leq|C^*_1|$. Also $R^*_1\subseteq R_1$ since for any center in $C^*_1$ 
there is a center (i.e., a hub) at distance at most $\opt/2$ in~$C_1$.

The next step is to compute a set of centers so that all clusters of the 
cluster set $\mc{T}$ of $\spc(\opt/2)$ are covered. Some of the clusters are 
already covered by the first set of centers $C_1$, and thus in this step we 
want to take care of all remaining uncovered clusters, i.e., those contained in 
$\mc{U}=\{T\in\mc{T}\mid T\setminus R_1\neq\emptyset\}$. By the definition of 
$C^*_1$, any remaining optimum center in $C^*\setminus C^*_1$ must lie in a 
cluster. Furthermore, the distance between clusters of $\spc(\opt/2)$ is more 
than~$\opt$ by \autoref{lem:clusters}, so that a center of $C^*\setminus C^*_1$ 
in a cluster $T$ cannot cover any vertices of another cluster $T'\neq T$. Hence 
if we guessed $H$ correctly, we can be sure that each cluster $T\in\mc{U}$ must 
contain a center of $C^*\setminus C^*_1$. For each (remaining) cluster 
$T\in\mc{U}$ we thus pick an arbitrary vertex $v\in T$ in \autoref{alg:set-C2} 
and declare it a center of the second set $C_2$ for our approximate solution. 
Hence if the optimum set of centers for $\mc{U}$ is $C^*_2=\{v\in C^*\mid\exists 
T\in\mc{U}:v\in T\}$, we have $|C_2|\leq |C^*_2|$ (if some cluster of $\mc{U}$ 
contains more than one optimum center in order to cover different parts of the 
non-cluster vertices, $C^*_2$ may be larger than $C_2$). Moreover, since the 
diameter of each cluster is at most~$\opt/2$ by \autoref{lem:clusters}, we get 
$R^*_2\subseteq R_2$.

At this time we know that all clusters in $\mc{T}$ are covered by the region 
$R_1\cup R_2$. Hence if any uncovered vertices remain in $V\setminus(R_1\cup 
R_2)$ for our current approximate solution, they must be non-cluster vertices. 
Just as $C^*_2$, by our definition of $C^*_1$, every remaining optimum center in 
$C^*_3=C^*\setminus (C^*_1\cup C^*_2)$ lies in some cluster. Since 
$R^*_1\subseteq R_1$ and $R^*_2\subseteq R_2$, any remaining uncovered vertex of 
$V\setminus(R_1\cup R_2)$ must be in the region $R^*_3$ covered by centers in 
$C^*_3$. Next we show how to compute a set $C_3$ such that the region $R_3$ 
includes all remaining vertices of the graph and $|C_3|\leq|C^*_3|$. Note that 
the latter means that the number of centers in $C_1\cup C_2\cup C_3$ is at 
most~$k$, since $C^*_1$, $C^*_2$, and $C^*_3$ are disjoint. 

To control the size of $C_3$ we will compute the smallest number of centers that 
cover parts of~$R^*_3$ with balls of radius $\opt$. In particular, in 
\autoref{alg:guess-hubs2} we guess the set of hubs $H'\subseteq 
\spc(\opt/2)\setminus H$ that lie in the region~$R^*_3$ (note that we exclude 
hubs of $H$ from this set). We then compute a center set $C_3$ of minimum size 
such that $H'\subseteq \bigcup_{v\in C_3}B_v(\opt)$. For this we reduce the 
problem of computing centers covering $H'$ to the \pname{Set Cover} problem with 
fixed universe size, as shown in \autoref{alg:red1} to~\autoref{alg:red2}. This 
reduction is performed before entering the loops guessing $H$ and $H'$ to 
optimize the running time. The universe $U$ of the \pname{Set Cover} instance 
consists of all hubs in the shortest path cover $\spc(r)$, while the set system 
$\mc{S}$ of the instance is obtained by restricting the balls $B_v(\rho)$ of 
radius $\rho$ around cluster vertices $v$ to the hubs. By 
\autoref{thm:set-cover-dp} there is an algorithm that computes the optimum 
\pname{Set Cover} solution for every subset of the universe. This algorithm is 
called in \autoref{alg:red2} of \autoref{alg_main} to fill a lookup 
table~$\mathbb{T}$ with these optima. We can thus retrieve the optimum 
\pname{Set Cover} solution for the subset $H'\subseteq\spc(r)$ in 
\autoref{alg:set-C3}, and let $C_3$ contain each cluster vertex $v$ for which 
the set of hubs contained in the ball $B_v(\rho)$ is part of the optimum 
solution covering $H'$, which is stored in the entry $\mathbb{T}[H']$ of the 
table. As the next lemma shows, we obtain the required properties for $C_3$.

\begin{lem}\label{lem:C3}
Assume the algorithm guessed the correct scale $r=\opt/2$ and the correct sets 
$H$ and $H'$. The set $C_3=\{ v\in \bigcup_{T\in\mc{T}} T \mid 
B_v(\rho)\cap\spc(r)\in \mathbb{T}[H'] \}$ is of size at most~$|C^*_3|$ and 
$H'\subseteq \bigcup_{v\in C_3}B_v(\opt)$.
\end{lem}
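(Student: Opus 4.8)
The plan is to prove the two assertions of \autoref{lem:C3} separately, the covering claim being a direct unwinding of the \pname{Set Cover} reduction and the size claim requiring a short structural argument about where the optimum centres of $C^*_3$ lie.

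For $H'\subseteq\bigcup_{v\in C_3}B_v(\opt)$ I would argue as follows. The reduction in \autoref{alg:red1}--\autoref{alg:red2} sets up the \pname{Set Cover} instance with universe $U=\spc(r)$ and set system $\mc{S}=\{\,B_v(\opt)\cap\spc(r)\mid v\in\bigcup_{T\in\mc{T}}T\,\}$, and by \autoref{thm:set-cover-dp} the entry $\mathbb{T}[H']$ is a genuine set cover of $H'$ in this instance (note $H'\subseteq\spc(r)=U$). Hence $H'\subseteq\bigcup_{S\in\mathbb{T}[H']}S$, and each $S\in\mathbb{T}[H']$ equals $B_v(\opt)\cap\spc(r)$ for some cluster vertex $v$, which by definition belongs to $C_3$. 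Therefore $H'\subseteq\bigcup_{v\in C_3}\bigl(B_v(\opt)\cap\spc(r)\bigr)\subseteq\bigcup_{v\in C_3}B_v(\opt)$.

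For $|C_3|\le|C^*_3|$ I would exhibit a set cover of $H'$ of size at most $|C^*_3|$; since $\mathbb{T}[H']$ is optimum, this bounds $|C_3|$. Two facts are needed, both valid because $r=\opt/2$ and $H,H'$ were guessed correctly. First, every centre of $C^*_3=C^*\setminus(C^*_1\cup C^*_2)$ lies in a cluster of $\mc{T}$: by the definition of $C^*_1$ every optimum centre within distance $\opt/2$ of $\spc(\opt/2)$ already lies in $C^*_1$, so a centre of $C^*_3$ is at distance more than $\opt/2$ from every hub, i.e.\ it is a vertex of the set $W$ from the proof of \autoref{lem:clusters}, and each such vertex belongs to some cluster. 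Second, the correctly guessed $H'$ is contained in $R^*_3=\bigcup_{c\in C^*_3}B_c(\opt)$, so for every $x\in H'$ there is $c\in C^*_3$ with $x\in B_c(\opt)\cap\spc(r)$. Combining these, $\{\,B_c(\opt)\cap\spc(r)\mid c\in C^*_3\,\}$ is a subfamily of $\mc{S}$ that covers $H'$ and has at most $|C^*_3|$ members, so $|\mathbb{T}[H']|\le|C^*_3|$, and reading $C_3$ as keeping one cluster vertex per selected set of $\mathbb{T}[H']$ gives $|C_3|\le|\mathbb{T}[H']|\le|C^*_3|$.

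The step I expect to be the main obstacle is making the two ``correct guess'' facts airtight — especially the claim that $C^*_3$ is entirely contained in the clusters, which rests on the precise definition of $C^*_1$ as \emph{all} optimum non-cluster centres together with $H$ being a minimum hitting set of them, so that nothing in $C^*_3$ can be close to a hub. A secondary point worth a sentence is that the defining condition for $C_3$ should be read as selecting a single cluster vertex for each set of $\mathbb{T}[H']$, so that distinct cluster vertices inducing the same hub-restricted ball do not inflate $|C_3|$ beyond $|\mathbb{T}[H']|$; this does not affect the covering claim.
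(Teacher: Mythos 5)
Your proof is correct and follows essentially the same route as the paper: the covering claim is read off from $\mathbb{T}[H']$ being a set cover of $H'$, and the size bound comes from showing that $\{B_c(\rho)\cap\spc(r)\mid c\in C^*_3\}$ is a feasible cover in the instance $(\spc(r),\mc{S})$ (using that $C^*_3\subseteq\bigcup_{T\in\mc{T}}T$ by the definition of $C^*_1$ and that a correctly guessed $H'$ lies in $R^*_3$), so optimality of $\mathbb{T}[H']$ gives $|C_3|\le|C^*_3|$. Your closing remark about taking one cluster vertex per set of $\mathbb{T}[H']$ is a sensible extra precaution addressing a detail the paper's proof passes over silently, but it does not change the argument.
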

\begin{proof}
The second property clearly follows since the sets $B_v(\rho)\cap\spc(r)$ in 
$\mathbb{T}[H']$ form a set cover for~$H'$, such that every hub in $H'$ is at 
distance at most $\rho$ from some $v\in C_3$. To see that $|C_3|\leq |C^*_3|$, 
it suffices to show that the vertices in $C^*_3$ correspond to a feasible 
\pname{Set Cover} solution for~$H'$. If~$H'$ was guessed correctly, this set 
contains only hubs in the region~$R^*_3$. As $R^*_3$ is covered by balls of 
radius $\rho$ around the centers in $C^*_3$, the union $\bigcup_{v\in 
C^*_3}B_v(\rho)\cap\spc(r)$ contains~$H'$. Moreover, these sets 
$B_v(\rho)\cap\spc(r)$ are contained in the set system $\mc{S}$, since all 
centers of $C^*_3$ are contained in clusters by definition of $C^*_1$. Thus the 
sets $B_v(\rho)\cap\spc(r)$ form a set cover for $H'$ in the 
instance~$(\spc(r),\mc{S})$.
\end{proof}

It remains to show that the three computed center sets $C_1$, $C_2$, and $C_3$ 
cover all vertices of $G$, which we do in the following lemma. In particular, 
the union $C_1\cup C_2\cup C_3$ will pass the feasibility test in 
\autoref{alg:feasible} of the algorithm.

\begin{algorithm}[b!]
\caption{FPA algorithm for \pname{$k$-Center} in low highway dimension graphs}
\label{alg_main}

\normalsize

\LinesNumbered
\DontPrintSemicolon
\SetAlgoVlined

\SetKwFunction{SC}{SetCoverDP}
\SetKwFunction{sort}{sort}
\SetKw{cont}{continue}

\KwIn{Graph $G=(V,E)$ of highway dimension $h$ with optimum \pname{$k$-Center} 
cost 
$\opt$}
\KwOut{\pname{$k$-Center} set $C$ of cost at most $\frac{3}{2}\opt$}

\BlankLine
$s\gets O(h\log h)$ \tcp{local sparsity of efficiently computable shortest 
path cover}\label{alg:sparsity}

\BlankLine
$\mathbb{A}\gets$ \sort{$\{\dist_G(u,v) \mid u,v\in V\}$} \tcp{sort distances 
and store them in array $\mathbb{A}$} \label{alg:guess-opt1}

\For(\tcp*[h]{consider distances in increasing order}){$i\gets 0$ \KwTo 
${n \choose 2} -1$}
{
$r\gets \mathbb{A}[i]/2$ \tcp{guess $r=\opt/2$} \label{alg:guess-opt2}

\BlankLine
Compute locally $s$-sparse $\spc(r)$ with cluster set $\mc{T}$ \;
\label{alg:compute-spc}
\lIf(\tcp*[h]{too many hubs means $r\neq\opt/2$})
{$|\spc(r)|>ks$}{\cont \label{alg:skip}}

\BlankLine
\tcp{prepare the \pname{Set Cover} lookup table}
$V(\mc{T})\gets \bigcup_{T\in\mc{T}} T$\; \label{alg:red1}
$\mc{S}\gets \bigcup_{v\in V(\mc{T})}\{B_v(\rho)\cap\spc(r)\}$ \tcp{the set 
system is given by hubs in balls of radius $\rho$ around cluster vertices}
$\mathbb{T}\gets$ \SC{$\spc(r),\mc{S}$} \tcp{lookup table $\mathbb{T}$ contains 
an optimum set cover for every subset of the universe $\spc(r)$} 
\label{alg:red2}

\BlankLine
\tcp{guess minimum-sized set of hubs covering non-cluster centers}
\ForEach{$H\subseteq\spc(r)$}{\label{alg:guess-hubs}
$C_1\gets H$ \tcp{these hubs form the 1st set of centers}\label{alg:set-C1}

\BlankLine
\tcp{cover all clusters not covered by balls around $H$:}
$R_1\gets \bigcup_{v\in C_1} B_v(3r)$ \tcp{the region covered so far}
$\mc{U}\gets \{T\in\mc{T}\mid T\setminus R_1\neq\emptyset\}$ \tcp{the clusters 
that still need to be covered}
$C_2\gets\emptyset$\; 
\ForEach{$T\in\mc{U}$}
{
$v\in T$ \tcp{select arbitrary vertex in $T$}\label{alg:set-C2}
$C_2\gets C_2\cup\{v\}$ \tcp{the 2nd set of centers}
}

\BlankLine
\tcp{cover rest of non-cluster vertices by reducing to \pname{Set Cover}:}
\ForEach(\tcp*[h]{guess hubs covered by centers 
in clusters}){$H'\subseteq\spc(r)\setminus H$}{\label{alg:guess-hubs2}
$C_3\gets \{ v\in V(\mc{T}) \mid B_v(\rho)\cap\spc(r)\in \mathbb{T}[H'] \}$ 
\label{alg:set-C3} \tcp{the 3rd set of centers is given by cluster vertices 
whose balls of radius $\rho$ cover $H'$}

\BlankLine
\tcp{check whether the solution is feasible:}
$C\gets C_1\cup C_2\cup C_3$\;
$R\gets \bigcup_{v\in C} B_v(3r)$ \tcp{the covered region}
\lIf(\tcp*[h]{a feasible solution was found}){$|C|\leq k$ \KwSty{and} $R=V$} 
{\KwRet{$C$} } \label{alg:feasible}
}
}
}
\end{algorithm}

\begin{lem}\label{lem:covered-region}
Assume the algorithm guessed the correct scale $r=\opt/2$ and the correct sets 
$H$ and $H'$. The approximate center sets $C_1$, $C_2$, and $C_3$ cover all 
vertices of $G$, i.e., $R_1\cup R_2 \cup R_3=V$.
\end{lem}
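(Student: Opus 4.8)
The plan is to argue by contradiction: suppose some vertex $u\in V$ is not covered, i.e.\ $u\notin R_1\cup R_2\cup R_3$. I would first use \autoref{lem:clusters} to classify $u$. Since $u$ is uncovered, in particular $u\notin R_1$, so if $u$ lies in some cluster $T\in\mc{T}$, then $T\in\mc{U}$, and the algorithm picked some $v\in T$ into $C_2$ in \autoref{alg:set-C2}. By \autoref{lem:clusters} the diameter of $T$ is at most $r=\opt/2$, so $\dist_G(u,v)\leq r < 3r$, which puts $u\in B_v(3r)\subseteq R_2$, a contradiction. Hence $u$ must be a non-cluster vertex, and by \autoref{lem:clusters} we have $\dist_G(u,\spc(r))\leq r$.

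Next I would locate $u$ inside the optimum solution. Since $C^*$ covers $G$ with balls of radius $\opt=2r$, there is an optimum center $c\in C^*$ with $\dist_G(u,c)\leq 2r$. The three optimum sets $C^*_1,C^*_2,C^*_3$ partition $C^*$, so $c$ lies in exactly one of them. If $c\in C^*_1$ then $u\in R^*_1\subseteq R_1$ (using the inclusion $R^*_1\subseteq R_1$ established in the text before the lemma, which comes from each center of $C^*_1$ having a hub of $C_1=H$ within distance $\opt/2$, so balls of radius $\frac32\opt$ around $C_1$ swallow balls of radius $\opt$ around $C^*_1$), contradicting $u\notin R_1$. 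Similarly if $c\in C^*_2$ then $u\in R^*_2\subseteq R_2$ (the diameter bound on clusters again giving $R^*_2\subseteq R_2$), again a contradiction. Therefore $c\in C^*_3$, i.e.\ $u\in R^*_3$.

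The remaining case is the crux of the argument, and it is where I expect the main work to lie: showing that $C_3$ covers every non-cluster vertex of $R^*_3$. Here the key is that $u$ has a hub nearby. Let $x\in\spc(r)$ be a hub with $\dist_G(u,x)\leq r$. I claim $x\in H'$. Indeed $\dist_G(x,c)\leq\dist_G(x,u)+\dist_G(u,c)\leq r+2r=3r$; I would then need to argue that this places $x$ in the region $R^*_3$ as defined in the guessing step — more precisely, that the correct guess $H'$ is exactly the set of hubs lying in $R^*_3=\bigcup_{v\in C^*_3}B_v(\opt)$, and since $c\in C^*_3$ with $\dist_G(x,c)\leq 3r$\dots here I must be careful: $R^*_3$ uses balls of radius $\opt=2r$, not $3r$. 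So the correct bound to use is $\dist_G(x,u)\leq r$ must be sharpened, or rather I should observe that the right statement in \autoref{alg_main} is that $H'$ consists of hubs at distance at most $\opt$ from some center in $C^*_3$; since $u$ itself is within $\opt$ of $c\in C^*_3$ and $x$ is within $r=\opt/2$ of $u$, we actually want $x$ within $\opt$ of $c$ directly, which need not hold. The clean fix: $u$ is a non-cluster vertex, so $u\notin R_1$ combined with the definition of $\mc{U}$ forces any cluster through $u$ into $\mc{U}$ — but $u$ is a non-cluster vertex, so instead I use that $x$, being within $r$ of $u$ and $u$ within $2r$ of $c$, need not be in $R^*_3$; rather, since $x\notin H$ (as $x$ within $r$ of a non-cluster vertex not covered by $C_1=H$, so $x\notin H$ by minimality of $H$ — a separate small claim), and $x$ is within $2r+r=3r$ of $c$; the guessed set $H'$ should therefore be defined as hubs within $\opt$ of a center of $C^*_3$, OR I appeal directly to \autoref{lem:C3}: once $x\in H'$, \autoref{lem:C3} gives some $v\in C_3$ with $\dist_G(x,v)\leq\opt=2r$, hence $\dist_G(u,v)\leq\dist_G(u,x)+\dist_G(x,v)\leq r+2r=3r$, so $u\in B_v(3r)\subseteq R_3$, the desired contradiction. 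The one gap to close carefully is exactly the claim $x\in H'$: I expect this requires noting that the algorithm (implicitly) guesses $H'$ over all subsets of $\spc(r)\setminus H$, so the ``correct'' $H'$ is free to be chosen as precisely the set of hubs hit by balls of radius $\opt$ around $C^*_3$ — and then showing $x$ is such a hub amounts to exhibiting a center $c'\in C^*_3$ with $\dist_G(x,c')\leq\opt$. Since $u\in R^*_3$ there is $c\in C^*_3$ with $\dist_G(u,c)\leq\opt$; but that only gives $\dist_G(x,c)\leq\opt+r$. The resolution must be that $x$ itself, not $u$, is the object covered by $C^*_3$ in the relevant sense: because $x$ is a hub within distance $r$ of $u$, and $u\notin R_1\cup R_2$, one shows $x\notin R_1$ either, and then $x$ being in the graph is covered by $C^*$, and the covering center cannot be in $C^*_1$ (else $x\in R_1$) nor $C^*_2$ (else, via the cluster-diameter bound applied to $x$ — but $x$ may not be a cluster vertex, so one uses instead that $R^*_2\subseteq R_2$ and $x\notin R_2$), leaving $x\in R^*_3$. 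This chain of exclusions for $x$ mirroring the one for $u$ is the technical heart, and I would write it out symmetrically for $u$ and $x$.

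Once $x\in H'$ is secured, the conclusion is immediate from \autoref{lem:C3}: there is $v\in C_3$ with $x\in B_v(\opt)$, so $\dist_G(u,v)\leq\dist_G(u,x)+\dist_G(x,v)\leq r+2r=3r$, giving $u\in R_3$ and contradicting the assumption that $u$ is uncovered. Hence $R_1\cup R_2\cup R_3=V$. The main obstacle, as indicated, is the bookkeeping around the radii ($r$ versus $2r$ versus $3r$) in establishing $x\in H'$ and in the inclusions $R^*_i\subseteq R_i$; everything else is a short triangle-inequality argument plus an appeal to \autoref{lem:clusters} and \autoref{lem:C3}.
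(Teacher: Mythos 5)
Your opening moves match the paper: assume an uncovered vertex $u$, note every cluster lies in $R_1\cup R_2$ so $u$ is a non-cluster vertex with a hub $x$ at distance at most $r=\opt/2$, and conclude from $R^*_1\subseteq R_1$, $R^*_2\subseteq R_2$ that the optimum center $w$ covering $u$ lies in $C^*_3$. The crux, however, is exactly the claim you flag yourself, $x\in H'$, and your proposed resolution does not work. You want to run the ``chain of exclusions'' for $x$ in place of $u$, i.e.\ deduce $x\notin R_1$ and $x\notin R_2$ and hence that the optimum center covering $x$ is in $C^*_3$. But $u\notin R_1$ only gives $\dist_G(x,C_1)>3r-r=2r$, which does not exclude $x\in R_1$; conversely, if the center covering $x$ were in $C^*_1$, you would only get $\dist_G(u,C_1)\leq r+2r+r=4r$, which is no contradiction with $u\notin R_1$ (and the same slack-counting kills the $R_2$ step). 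So $x\in R^*_3$, and hence $x\in H'$, cannot be established this way; the correct guess $H'$ consists of hubs within $\opt$ of $C^*_3$, and the nearest hub to $u$ need not be one of them. Indeed, the paper's proof derives the \emph{opposite} under the contradiction hypothesis: $x\notin H'$, precisely by your own final triangle inequality ($x\in H'$ plus \autoref{lem:C3} would put $u$ in $R_3$). (Also, $x\notin H$ follows simply because any vertex within $3r$ of a hub of $H=C_1$ lies in $R_1$, not ``by minimality of $H$''.)

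What is missing are the two ingredients the paper actually uses. First, the hitting property of $\spc(\opt/2)$ from \autoref{dfn:spc} applied to the shortest $u$--$w$ path: since $x\notin H\cup H'$ forces $\dist_G(x,w)>\opt$ and hence $\dist_G(u,w)>\opt/2$, while $\dist_G(u,w)\leq\opt$ because $w$ covers $u$, this path has length in $(\opt/2,\opt]$ and therefore contains a hub $y$, with $\dist_G(u,y)+\dist_G(y,w)=\dist_G(u,w)\leq\opt$. Then $y\notin H$ (else $u\in R_1$), and $\dist_G(y,w)\leq\opt$ with $w\in C^*_3$ gives $y\in H'$. Second, the fact that $w$, not being in $C^*_1$, lies in a cluster and is therefore at distance more than $\opt/2$ from every hub (\autoref{lem:clusters}), so $\dist_G(y,w)>\opt/2$ and hence $\dist_G(u,y)<\opt/2$; combined with \autoref{lem:C3} this places $u$ within $\tfrac{3}{2}\opt$ of a center of $C_3$, the desired contradiction. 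Without the detour through this second hub $y$ and the ``cluster vertices are far from hubs'' bound, the radii do not add up, which is exactly where your write-up stalls.
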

\begin{proof}
The proof is by contradiction: assume there is a $v\in V\setminus(R_1\cup R_2 
\cup R_3)$ that is not covered by the computed approximate center sets. The 
idea 
is to identify a hub $y\in\spc(\opt/2)$ on the shortest path between $v$ and an 
optimum center $w\in C^*$ covering $v$. We will show that this hub $y$ must 
however be in $H'$ and therefore $v$ is in fact in $R_3$, since $v$ also turns 
out to be close to $y$.

To show the existence of $y$, we begin by arguing that the closest hub 
$x\in\spc(\opt/2)$ to $v$ is neither in $H$ nor in $H'$. We know that each 
cluster of $\mc{T}$ is in~$R_1\cup R_2$, so that $v\notin R_1\cup R_2$ must be 
a non-cluster vertex. Thus by \autoref{lem:clusters}, $\dist_G(v,x)\leq\opt/2$. 
The region $R_1$ in particular contains all vertices that are at distance at 
most $\opt/2$ from any hub in~$H=C_1$. Since $v\notin R_1$ and 
$\dist_G(v,x)\leq\opt/2$, this means that $x\notin H$. From $v\notin R_3$ we 
can also conclude that $x\notin H'$ as follows. By \autoref{lem:C3}, $C_3$ 
covers all hubs of $H'$ with balls of radius $\opt$. Hence if $x\in H'$ then 
$v$ would be at distance at most $\frac{3}{2}\opt$ from a center of $C_3$, 
i.e., $v\in R_3$.

From $x\notin H\cup H'$ we can conclude the existence of $y$ as follows.
Consider an optimum center $w\in C^*$ that covers $v$, i.e., $v\in B_w(\opt)$. 
Recall that $R^*_1\subseteq R_1$ and~$R^*_2\subseteq R_2$. Since $v\notin 
R_1\cup R_2$, this means that $w$ is neither in $C^*_1$ nor in~$C^*_2$ so that 
$w\in C^*_3$. By definition of $H'$, any hub at distance at most $\opt$ from a 
center in $C^*_3$ is in $H'$, unless it is in $H$. Hence, as~$x\notin H\cup 
H'$, the distance between $x$ and $w$ must be more than $\opt$. Since 
$\dist_G(v,x)\leq \opt/2$, we get $\dist_G(v,w)>\opt/2$. We also know that 
$\dist_G(v,w)\leq\opt$, because $w$ covers~$v$. Hence the shortest path cover 
$\spc(\opt/2)$ must contain a hub~$y$ that lies on the shortest path 
between $v$ and~$w$. In particular, $\dist_G(v,y)\leq\opt$ and 
$\dist_G(y,w)\leq\opt$. Analogous to the argument used for $x$ above, $R_1$ in 
particular contains all vertices at distance at most $\opt$ from $H$, so that 
$y\notin H$ since $v\notin R_1$. However, then the distance bound for $y$ and 
$w$ yields $y\in H'$, as $w\in C^*_3$.

Since $C^*_1$ contains all non-cluster centers but $w\notin C^*_1$, by 
\autoref{lem:clusters} we get $\dist_G(y,w)>\opt/2$, which implies 
$\dist_G(v,y)<\opt/2$. But then $v$ is contained in the ball $B_{y}(\opt/2)$, 
which we know is part of the third region $R_3$ since $y\in H'$. This 
contradicts the assumption that $v$ was not covered by the approximate center 
set.
\end{proof}

Note that the proof of \autoref{lem:covered-region} does not imply that 
$R^*_3\subseteq R_3$, as was the case for $R_1$ and $R_2$. It suffices though to 
establish the correctness of the algorithm. Finally, we conclude the proof of 
\autoref{thm:alg} by analysing the runtime of the algorithm.

\begin{proof}[Proof of \autoref{thm:alg}]
By \autoref{lem:clusters} and \autoref{lem:C3}, if \autoref{alg_main} correctly 
guesses the cost $\opt$ and the two hub sets $H$ and $H'$, then $|C_1\cup 
C_2\cup C_3|\leq k$ and $R_1\cup R_2\cup R_3=V$. By \autoref{lem:bounded-hubs}, 
$|\spc(\opt/2)|\leq ks$ so that the correct value for $r$ will not be skipped in 
\autoref{alg:skip}. Hence by trying all possible values for $\opt$ in increasing 
order, \autoref{alg_main} will terminate with a feasible solution 
that covers all vertices with balls of radius $\frac{3}{2}\opt$, due to 
\autoref{alg:feasible}. To prove \autoref{thm:alg} it remains to bound the 
running time.

There are at most $n \choose 2$ possible values for $\opt$ that need to be 
tried by the outermost loop, one for every pair of vertices. Hence the only 
steps of \autoref{alg_main} that incur exponential running times are when 
guessing $H$ and~$H'$ and when filling the table $\mathbb{T}$ of the dynamic 
program for the \pname{Set Cover} problem. These steps are only performed for 
shortest path covers of size at most $ks$ due to \autoref{alg:skip}. Since we 
explicitly exclude the hubs in $H$ when choosing~$H'$, each hub of a shortest 
path cover can either be in $H$, in $H'$, or in none of them when trying all 
possibilities. Hence this gives $3^{ks}$ possible outcomes. Filling the table 
$\mathbb{T}$ takes $O(2^{ks}\cdot n^{O(1)})$ time according to 
\autoref{thm:set-cover-dp}, while retrieving an optimum solution for $H'$ in 
\autoref{alg:set-C3} can be done in constant time. Thus the total running time 
to compute a $3/2$-approximation is $O(3^{ks}\cdot n^{O(1)})$. If the input 
graph has highway dimension~$h$, \citet{abraham2011vc} show how to compute 
$O(\log h)$-approximations of shortest path covers in polynomial time if 
shortest paths have unique lengths. The latter can be assumed by slightly 
perturbing the edge lengths in such a way that any $3/2$-approximation in the 
perturbed instance also is a $3/2$-approximation in the original instance. 
Therefore we can set $s=O(h\log h)$ during the execution of our algorithm. If 
there is an oracle that gives locally $h$-sparse shortest path covers for each 
scale, then we can set $s=h$ instead. Thus the claimed running times follow.
\end{proof}

\section{Hardness results}\label{sec:hard}

We begin by observing that the original reduction of 
\citet{hsu1979k-center-hard} for \pname{$k$-Center} also implies that there are 
no $(2-\eps)$-FPA algorithms.

\begin{thm}\label{thm:hard-k}
It is W[2]-hard for parameter $k$ to compute a $(2-\eps)$-approximation to the 
\pname{$k$-Center} problem for any $\eps>0$.
\end{thm}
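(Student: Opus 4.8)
The plan is to reduce from the \pname{Dominating Set} problem, which is W[2]-hard for the parameter given by the size of the optimum dominating set, following the classical reduction of \citet{hsu1979k-center-hard}. Given an instance $(G,k)$ of \pname{Dominating Set} on an unweighted graph $G=(V,E)$, I would build a \pname{$k$-Center} instance on (essentially) the same vertex set $V$, where the edge lengths encode adjacency: put a unit-length edge between $u$ and $v$ whenever $uv\in E$, and a length-$2$ edge between every non-adjacent pair (or, equivalently, work with the metric completion where $\dist(u,v)=1$ if $uv\in E$ and $\dist(u,v)=2$ otherwise). Then a set $C\subseteq V$ with $|C|\le k$ has \pname{$k$-Center} cost $1$ precisely when every vertex has a neighbour in $C$, i.e.\ $C$ is a dominating set of $G$; otherwise the cost is $2$, since any two distinct vertices are at distance at most $2$ and a non-dominating $C$ leaves some vertex at distance exactly $2$ from $C$. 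This is the standard gap construction giving inapproximability factor $2$.

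The key observation is that this reduction is parameter-preserving: the parameter $k$ of the \pname{Dominating Set} instance is exactly the number of centers $k$ in the constructed \pname{$k$-Center} instance, and the construction runs in polynomial time. So a hypothetical $(2-\eps)$-FPA algorithm for \pname{$k$-Center} with running time $f(k)\cdot n^{O(1)}$ would, on this instance, distinguish cost $1$ from cost $2$ (since $(2-\eps)\cdot 1 < 2$), thereby deciding in time $f(k)\cdot n^{O(1)}$ whether $G$ has a dominating set of size $k$. This would place \pname{Dominating Set} in FPT for its standard parameter, contradicting its W[2]-hardness (unless FPT${}={}$W[2], equivalently P${}={}$W[2] in the informal phrasing used in the introduction). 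Hence computing a $(2-\eps)$-approximation to \pname{$k$-Center} is W[2]-hard for parameter $k$.

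The only point requiring a little care is the soundness direction of the gap: I must argue that if $G$ has \emph{no} dominating set of size $k$, then \emph{every} $C\subseteq V$ with $|C|\le k$ has \pname{$k$-Center} cost at least $2$ (not just cost $>1$). This is immediate in the metric where all distances lie in $\{1,2\}$: if $C$ is not dominating there is a vertex $v$ with $\dist(v,C)\ge 2$, so the cost is exactly $2$; conversely a dominating $C$ achieves cost $1$. One should also note that centers may be placed at any vertex of $V$, which is fine since in \pname{Dominating Set} the dominating set is a subset of $V$ as well; if one prefers to keep the instance as a graph rather than a metric, replacing each non-edge by a length-$2$ edge yields the same shortest-path distances, so the two formulations coincide. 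With the gap established and the parameter preserved, the theorem follows. I expect no genuine obstacle here — the content is entirely in recognising that the textbook NP-hardness reduction of Hsu and Nemhauser is already an FPT reduction from parameterised \pname{Dominating Set}; the rest is the routine verification above.
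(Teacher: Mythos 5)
Your proposal is correct and follows essentially the same route as the paper: the Hsu--Nemhauser reduction from the W[2]-hard \pname{Dominating Set} problem with unit edge lengths, using the gap between cost $1$ (dominating set) and cost at least $2$ (not dominating), with the parameter $k$ preserved. The only cosmetic difference is that you pass to the metric completion with distances in $\{1,2\}$, whereas the paper keeps the unit-length graph (where integrality of distances already gives cost $\geq 2$ in the soundness direction) and phrases the argument as guessing $|D|$ in increasing order rather than using the decision version directly.
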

\begin{proof}[Proof (cf.~\cite{hsu1979k-center-hard,Vazirani01book})]
The reduction is from the \pname{Dominating Set} problem, which is 
W[2]-hard~\cite{downey2013fpt} for the standard parameter, i.e., the size of 
the smallest dominating set $D$ of the input graph $G$. The reduction simply 
introduces unit lengths for each edge of $G$, guesses the size of $D$, and sets 
$k=|D|$. Any feasible center set of cost $1$ corresponds to a dominating set, 
and vice versa. On the other hand, a center set has cost at least $2$ if and 
only if it is not a dominating set. Hence if the size of $D$ is guessed in 
increasing order starting from~$1$, $k$~must be equal to $|D|$ the first time a 
$(2-\eps)$-approximation of cost $1$ is obtained by an algorithm for 
\pname{$k$-Center}. 
By guessing the size of $D$ in increasing order, this would result in an 
$f(|D|)\cdot n^{O(1)}$ time algorithm to compute the optimum dominating set if 
there was a $(2-\eps)$-FPA algorithm for parameter $k$ for \pname{$k$-Center}.
\end{proof}

We now turn to proving that $(2-\eps)$-approximations are hard to compute on 
graphs with low highway dimension. For this we introduce a general reduction 
from low doubling metrics to low highway dimension graphs in the next lemma. A 
metric $(X,\dist_X)$ has \emph{doubling dimension} $d$ if for every 
$r\in\mathbb{R}^+$, each set $S\subseteq X$ of diameter $2r$ is the union of at 
most $2^d$ sets of diameter~$r$. The \emph{aspect ratio} $\alpha$ of a metric 
$(X,\dist_X)$ is the maximum distance between any two vertices of $X$ divided 
by the minimum distance, i.e., 
$\alpha=\max\{\frac{\dist_X(s,t)}{\dist_X(u,v)}\mid s,t,u,v\in X\land u\neq 
v\}$.

\begin{lem}\label{lem:dd-hd}
Given any metric $(X,\dist_X)$ with constant doubling dimension $d$ and aspect 
ratio $\alpha$, for any~$0<\eps<1$ there is a graph $G=(X,E)$ of highway 
dimension $O((\log(\alpha)/\eps)^d)$ on the same vertex set such that for 
all~$u,v\in X$,
$
\dist_X(u,v)\leq\dist_G(u,v)\leq (1+\eps)\dist_X(u,v).
$
Furthermore, $G$ can be computed in polynomial time from the metric.
\end{lem}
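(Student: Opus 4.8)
The plan is to build the graph $G$ by connecting pairs of vertices in $X$ whose metric distance lies in certain geometrically-spaced scales, and then to argue that shortest path covers at every scale can be obtained from a sparse net of the metric. First I would fix the geometric scales $\ell_j = 2^j$ for $j$ ranging over roughly $\log\alpha$ consecutive integers bracketing all pairwise distances (after rescaling so the minimum distance is $1$). For each such scale $\ell_j$ I would take a maximal $(\eps\ell_j)$-net $N_j \subseteq X$ — a set of points pairwise at distance more than $\eps\ell_j$ that is also $\eps\ell_j$-covering. The edge set $E$ of $G$ is then defined by connecting each $u \in X$ to every net point in $N_j$ within distance, say, $(1+\eps)\ell_j$ of $u$, with the edge assigned its true metric length $\dist_X(u,v)$; one also adds an edge of length $\dist_X(u,v)$ between every pair $u,v$ with $\dist_X(u,v) \le 1$ (the smallest scale), so that the graph is connected and no distance is inflated on short pairs. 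Since the net points at each scale act as local hubs, any two vertices can be connected by a short path hopping through nearby net points.

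Second, I would verify the distance-preservation sandwich. The lower bound $\dist_X(u,v) \le \dist_G(u,v)$ is immediate because every edge of $G$ has length equal to its metric endpoints' distance, so by the triangle inequality no path in $G$ can be shorter than the metric distance. For the upper bound I would give an explicit routing scheme: to travel from $u$ to $v$ with $\dist_X(u,v) \in (\ell_{j-1}, \ell_j]$, hop from $u$ to a net point $p \in N_j$ with $\dist_X(u,p) \le \eps\ell_j$, then — since $\dist_X(p,v) \le (1+\eps)\ell_j$ — we would want a direct edge $pv$; more carefully, recurse down the scales, at scale $i$ moving within an additive error of $O(\eps\ell_i)$, so the total detour is a geometric sum $\sum_{i \le j} O(\eps \ell_i) = O(\eps\ell_j) = O(\eps\,\dist_X(u,v))$. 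Rescaling $\eps$ by the hidden constant gives the clean $(1+\eps)$ factor. This is a standard net-hopping argument and I expect it to be routine once the edge set is pinned down precisely.

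Third — and this is the crux — I would bound the highway dimension of $G$ via \autoref{dfn:spc}. For an arbitrary scale $r$, locate the integer $j$ with $\ell_j$ comparable to $r$ (within a constant factor), and propose $\spc(r) = N_j$, the net at that scale, as the shortest path cover. To see it is a hitting set: any shortest path in $G$ of length in $(r, 2r]$ has metric-diameter $\Theta(r)$, hence by construction it uses an edge incident to some net point of $N_j$ (or a nearby net), so it contains a hub. To see local $h$-sparsity: a ball $B_v(2r)$ in $G$ has metric-radius $O(r)$ by the distance sandwich, so it is contained in a metric ball of radius $O(\ell_j)$; by the packing property of doubling metrics, such a ball contains at most $2^{O(d \log(1/\eps))}$ points of an $(\eps\ell_j)$-net. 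Multiplying over the $O(\log\alpha)$ relevant scales — since the shortest path cover for scale $r$ may need to see hubs from a constant number of neighbouring net levels — yields $h = O((\log\alpha/\eps)^d)$, as claimed. The main obstacle I anticipate is making the hitting-set claim airtight: one must rule out a long shortest path in $G$ that entirely avoids the scale-$j$ net, which forces the routing scheme and the net-spacing constants to be chosen compatibly, so that every sufficiently long shortest path is "caught" by hubs at its natural scale. Finally, polynomial-time constructibility is clear since greedy net construction and the edge definition are both polynomial in $|X|$.
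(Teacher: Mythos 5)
There is a genuine gap, and it is exactly at the point you flagged: the hitting-set property fails if edges carry their \emph{true} metric lengths. Because every edge of your $G$ has length $\dist_X(u,v)$, any concatenation of short edges whose metric lengths add up to $\dist_X(u,v)$ is already a shortest path in $G$ (your own lower bound $\dist_G\geq\dist_X$ shows it cannot be beaten), and such paths can dodge the sparse net $N_j$ entirely. Concretely, take the application the lemma is needed for: the unit grid with $L_1$/$L_\infty$ distances. Your smallest-scale edges (all pairs at distance at most $1$) create, between two points at distance $\Theta(r)$ with diagonal displacement, $\Theta(r)$ vertex-disjoint monotone staircase paths, each of length exactly $\dist_X(u,v)$ and hence a shortest path in $G$. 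Since any shortest path cover for scale $\Theta(r)$ must hit all of them, a single ball of radius $2r$ is forced to contain $\Omega(r)$ hubs, so the highway dimension of your graph is polynomial in $n$, not $O((\log\alpha/\eps)^d)$. Tuning net spacings and routing constants cannot repair this; no choice of constants prevents exact-length short edges from composing into long shortest paths that avoid the designated hubs.

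The paper's proof resolves precisely this issue by spending the allowed $(1+\eps)$ distortion \emph{non-uniformly across scales}. It builds a hierarchy of nets $Y_i\subseteq Y_{i-1}$ (with $Y_0=X$ and spacing roughly $\eps 2^{i}/L$ where $L=\lceil\log_2\alpha\rceil$), puts edges only between net points of the same level, and assigns the edge $uv$ at level $i$ the \emph{stretched} length $(1+\eps(1-i/L))\dist_X(u,v)$: higher-level edges are relatively cheaper. A calculation then shows that for a pair at distance about $2^i$, the route through nearby $Y_i$-hubs is strictly shorter than any path avoiding $Y_i$ (which can only use the more expensive lower-level edges), so \emph{every} shortest path of length in $(2^i,2^{i+1}]$ is forced through $Y_i$; since consecutive levels differ by only $\eps/L$ in their multiplier, the total distortion stays within $1+\eps$. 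Local sparsity then follows from the doubling packing bound applied to $Y_i$, whose spacing carries the extra $1/L$ factor — this is also where the $\log\alpha$ inside the bound $O((\log(\alpha)/\eps)^d)$ comes from, rather than from a union over scales as in your count. So your distance-sandwich and packing steps are fine, but the construction itself must perturb edge lengths scale-dependently; with exact lengths the lemma's conclusion is false for your graph.
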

\begin{proof}
First off, by scaling we may assume w.l.o.g.\ that the minimum distance of the 
given metric is $\frac{2}{1+\eps}$. In particular this means that the maximum 
distance is $\frac{2\alpha}{1+\eps}$. A fundamental 
property~\cite{gupta2003doubling} of low doubling dimension metrics is that for 
any set of points $Y\subseteq X$ with aspect ratio $\alpha'$, the number of 
points $|Y|$ can be at most $2^{d\lceil\log_2\alpha'\rceil}$. The proof of this 
property is a simple recursive application of the doubling dimension definition. 
For each scale $2^i$ where \mbox{$i\in\{0,1,\ldots,\lceil\log_2 \alpha\rceil\}$} 
we will identify a sparse set~$Y_i$, which in any ball of radius~$2^{i+1}$ has 
aspect ratio~$O(\log(\alpha)/\eps)$. The idea is to use the vertices of $Y_i$ as 
hubs in a shortest path cover for scale~$2^i$, which then are locally sparse in 
any such ball. We will make sure that there is an index $i$ with a hub set $Y_i$ 
for any possible distance between vertex pairs in the resulting graph $G$. We 
need to make sure though that the shortest path for any pair of vertices passes 
through a corresponding hub of some~$Y_i$. We achieve this by adding edges 
between the hubs in~$Y_i$, which act as shortcuts. That is, the edges of $G$ 
will be slightly longer than the distances in the metric given by $\dist_X$, and 
we will make the distances shorter with increasing scales in order to guarantee 
that the shortest paths pass through corresponding hubs.

More concretely, consider any set $Z\subseteq X$ of vertices. A subset 
$Y\subseteq Z$ is a \emph{$\rho$-cover} of $Z$ if for every $v\in Z$ there is a 
$u\in Y$ such that $\dist_X(u,v)\leq\rho$, and $Y$ is a \emph{$\rho$-packing} of 
$Z$ if $\dist_X(u,v)>\rho$ for all distinct $u,v\in Y$. A \emph{$\rho$-net} of 
$Z$ is a set $Y\subseteq Z$ that is a $\rho$-cover and a $\rho$-packing of $Z$. 
It is easy to see that such a net can be computed greedily in $O(n^2)$ time. We 
will use sets $Y_i$ that form a \emph{hierarchy} $Y_i\subseteq Y_{i-1}$ of nets 
as hubs. In particular, $Y_0=X$ and $Y_i$ is a $\frac{\eps 2^{i-3}}{(1+\eps)^2 
L}$-net of $Y_{i-1}$ for each~$i\geq 1$, where $L=\lceil\log_2\alpha\rceil$ is 
the index of the largest scale. Note that due to the triangle inequality of the 
metric, each $Y_i$ is a \smash{$2\frac{\eps 2^{i-3}}{(1+\eps)^2 L}$-cover} 
of~$X$.

In $G$, for each $i$ we connect two vertices $u,v\in Y_i$ by an edge $uv$ of 
length $(1+\eps(1-i/L))\dist_X(u,v)$. If a vertex pair is contained in several 
sets $Y_i$ of different scales, we only add the shortest edge according to the 
above rule, i.e., the edge for the largest index $i$. Hence the distance in $G$ 
between any $u,v\in Y_i$ is at most $(1+\eps(1-i/L))\dist_X(u,v)$. In 
particular, $\dist_G(u,v)\leq(1+\eps)\dist_X(u,v)$ for any $u,v\in X$ since 
$X=Y_0$. Note also that $1+\eps(1-i/L)\geq 1$ and hence 
$\dist_X(u,v)\leq\dist_G(u,v)$.

To bound the highway dimension of $G$, consider any pair $u,v\in X$, and let 
$i\in\{0,1,\ldots,L\}$ be such that $\dist_G(u,v)\in(2^i,2^{i+1}]$. Recall that 
the minimum distance according to $\dist_X$ is $\frac{2}{1+\eps}>1$ (as 
$\eps<1$), while the maximum distance is $\frac{2\alpha}{1+\eps}$. Accordingly, 
in $G$ all distances lie in $(1,2\alpha]$ so the index $i$ exists. We show that 
the shortest path between $u$ and $v$ passes through a hub of $Y_i$. We do this 
by upper bounding $\dist_G(u,v)$ in terms of $\dist_X(u,v)$ using a path that 
contains vertices of~$Y_i$. Then we lower bound the length of any path that does 
not pass through $Y_i$ and show that it is longer than the shortest path. 

Let $x\in Y_i$ be the closest hub to $u$ and let $y\in Y_i$ be the closest hub 
to~$v$. We begin by determining some distance bounds for these vertices. Since 
$Y_i$ is a $2\frac{\eps 2^{i-3}}{(1+\eps)^2 L}$-cover of $X$ in the 
metric according to~$\dist_X$, the distances in $G$ from $u$ to $x$ and from 
$v$ to $y$ are at most \smash{$2(1+\eps)\frac{\eps 2^{i-3}}{(1+\eps)^2 
L}=\frac{\eps 2^{i-2}}{(1+\eps)L}$} each. It also means that 
\smash{$\dist_X(x,y)\leq\dist_X(u,v)+2\cdot\frac{\eps 2^{i-2}}{(1+\eps)^2 
L}$}, since we can get from $x$ to $y$ through $u$ and $v$ in the metric. We 
know that $\dist_G(u,v)>2^i$ and thus we have 
$\frac{2^i}{1+\eps}<\dist_X(u,v)$. Using these bounds we get
\[
\dist_G(u,v)&\leq\dist_G(u,x)+\dist_G(x,y)+\dist_G(y,v) \\
&\leq \left[1+\eps\left(1-\frac{i}{L}\right)\right]\dist_X(x,y) + 
2\cdot\frac{\eps 2^{i-2}}{(1+\eps)L} \\
&< \left[1+\eps\left(1-\frac{i}{L}\right)\right] 
\left(\dist_X(u,v)+2\cdot\frac{\eps 2^{i-2}}{(1+\eps)^2 L}\right) + 
\frac{\eps}{2L}\dist_X(u,v) \\
&< \left[1+\eps\left(1-\frac{i}{L}\right) + 
\left(1+\eps\left(1-\frac{i}{L}\right)\right)\frac{\eps}{2(1+\eps)L} 
+\frac{\eps}{2L} \right]\dist_X(u,v)\\
&\leq \left[1+\eps\left(1-\frac{i}{L}\right) + 
(1+\eps)\frac{\eps}{2(1+\eps)L} +\frac{\eps}{2L} \right]\dist_X(u,v)\\
&= \left[1+\eps\left(1-\frac{i-1}{L}\right)\right]\dist_X(u,v).
\]

We now show that every path $P$ that does not use any hub of $Y_i$ is longer 
than $\dist_G(u,v)$. Since the hub sets of different scales form a hierarchy, 
any hub of scale $2^j$ with $j>i$ is also a hub for scale~$2^i$. Hence if $P$ 
does not pass through any hub of $Y_i$, it also cannot contain any vertex of 
$Y_j$ where $j>i$. Thus, if $P=(w_0,\ldots,w_l)$ where $w_0=u$ and $w_l=v$, any 
edge $w_jw_{j+1}$ on $P$ will be of length at least 
$(1+\eps(1-(i-1)/L))\dist_X(w_j,w_{j+1})$. The sum 
$\sum_{j=0}^{l-1}\dist_X(w_j,w_{j+1})$ of all the distances in the metric over 
the path $P$ is an upper bound on $\dist_X(u,v)$, and thus the length of $P$ is 
at least $(1+\eps(1-(i-1)/L))\dist_X(u,v)$. Since the distance $\dist_G(u,v)$ 
is strictly smaller than this bound by the above calculations, the shortest 
path between $u$ and $v$ in~$G$ passes through some hub of $Y_i$.

To bound the highway dimension, for any $r>0$ we still need to bound the number 
of hubs that hit shortest paths of length in $(r,2r]$ in a ball $B$ of radius 
$2r$ in $G$. Since our hub sets form a hierarchy, we may consider all shortest 
paths longer than $r$: if $i$ is the index such that $r\in(2^i,2^{i+1}]$, all 
shortest paths of length more than $2^i$ are hit by hubs of $Y_i$ because 
$Y_j\subseteq Y_i$ for all $j>i$. In~$G$ the ball $B$ has a diameter of at most 
$4r$. Measured in the metric according to $\dist_X$ the set of vertices in~$B$ 
also has a diameter of at most $4r\leq 2^{i+3}$, since 
$\dist_X(u,v)\leq\dist_G(u,v)$ for any vertices $u,v\in X$. Because $Y_i$ is a 
$\frac{\eps 2^{i-3}}{(1+\eps)^2 L}$-packing in the metric, the aspect ratio of 
$Y_i\cap B$ is $\alpha'\leq 64(1+\eps)^2 L/\eps$. By the fundamental property 
of low doubling metrics~\cite{gupta2003doubling} mentioned above, there are at 
most $(128(1+\eps)^2 L/\eps)^d$ hubs in $Y_i\cap B$, which concludes the proof.
\end{proof}

\citet{feder1988metric-k-center} show that, for any $\eps>0$, it is NP-hard to 
compute a $(2-\eps)$-approximation for the \pname{$k$-Center} problem in 
two-dimensional $L_\infty$ metrics. In particular, the metric is induced by a 
grid graph with unit edge lengths, so that the aspect ratio is at most $n$. The 
doubling dimension of any such metric is $2$, since a vertex set of 
diameter~$2r$ (contained in a ``square'' of side-length~$2r$) can be covered by 
$4$ vertex sets of diameter $r$ (contained in ``squares'' of side-length~$r$). 
By the reduction given in \autoref{lem:dd-hd} we can thus obtain graphs of 
highway dimension $O(\log^2 n)$ for which computing $(2-\eps)$-approximations to 
\pname{$k$-Center} is NP-hard. The challenge remains is to push the highway 
dimension bound of this inapproximability result down to a constant. This would 
mean that no \mbox{$(2-\eps)$-FPA} algorithm for \pname{$k$-Center} exists if 
the parameter is the highway dimension $h$, unless~P=NP. However, we can still 
argue that assuming the exponential time hypothesis (ETH)~\cite{eth,eth-2}, any 
$(2-\eps)$-FPA algorithm for parameter $h$ must have doubly exponential running 
time. In particular, the above hardness result implies a polynomial-time 
reduction from \pname{SAT} to \pname{$k$-Center} on graphs of highway dimension 
$O(\log^2 n)$. That is, given a \pname{SAT} formula of size $N$, the reduction 
will produce a graph with $n=N^{O(1)}$ vertices and highway dimension 
$h=O(\log^2 n)=O(\log^2 N)$. Thus an algorithm computing a 
$(2-\eps)$-approximation to \pname{$k$-Center} in time $2^{2^{o(\sqrt 
h)}}\cdot n^{O(1)}$ would be able to decide \pname{SAT} in time $2^{o(N)}\cdot 
N^{O(1)}$. However, this would contradict ETH. Thus if a 
$(2-\eps)$-approximation 
algorithm for \pname{$k$-Center} with parameter $h$ exists, it is fair to assume 
that its running time dependence on $h$ must be extremely large. To summarize we 
obtain the following lower bounds.

\begin{crl}\label{crl:hard-h-plog}
For any constant~$\eps>0$ it is NP-hard to compute a $(2-\eps)$-approximation 
for the \pname{$k$-Center} problem on graphs of highway dimension $O(\log^2 n)$. 
Moreover, there is no $(2-\eps)$-FPA for \pname{$k$-Center} parameterized by 
the highway dimension $h$ with runtime $2^{2^{o(\sqrt h)}}\cdot n^{O(1)}$, 
unless ETH fails.
\end{crl}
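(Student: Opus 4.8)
The plan is to combine the inapproximability result in $L_\infty$-metrics due to \citet{feder1988metric-k-center} with the metric-to-graph reduction of \autoref{lem:dd-hd}, and then to make the ETH part quantitative by tracking instance sizes through the chain of reductions. First I would recall that \citet{feder1988metric-k-center} give, for every $\eps>0$, a polynomial-time reduction that takes a \pname{SAT} (or \pname{Vertex Cover}, \pname{Dominating Set}, etc.) instance of size $N$ and produces a two-dimensional $L_\infty$-metric on $n=N^{O(1)}$ points, realized by a grid graph with unit edge lengths, such that deciding between optimum cost $1$ and optimum cost $2$ is equivalent to deciding the \pname{SAT} instance. This metric has doubling dimension exactly $2$ (a side-$2r$ square is covered by four side-$r$ squares) and aspect ratio $\alpha\le n$, since all pairwise distances are integers between $1$ and $n$.

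Next I would apply \autoref{lem:dd-hd} to this metric with a suitably small constant $\eps'<1$ — say $\eps'=1/4$ — to obtain, in polynomial time, a graph $G=(X,E)$ on the same vertex set whose distances satisfy $\dist_X(u,v)\le\dist_G(u,v)\le(1+\eps')\dist_X(u,v)$ and whose highway dimension is $O((\log\alpha/\eps')^d)=O(\log^2 n)$ (using $d=2$ and $\eps'$ constant). The $(1+\eps')$-distortion means a \pname{$k$-Center} solution of cost $1$ in the metric becomes one of cost at most $1+\eps'<3/2$ in $G$, while a solution that had cost $\ge 2$ in the metric still has cost $\ge 2$ in $G$; since these windows are separated, a $(2-\eps)$-approximation on $G$ (for any $\eps>0$, after possibly shrinking $\eps'$ so that $(1+\eps')(2-\eps)<2$, or simply noting the gap is multiplicative between $<3/2$ and $\ge 2$) distinguishes the two cases and hence decides \pname{SAT}. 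This proves the first sentence of the corollary: $(2-\eps)$-approximation is NP-hard on graphs of highway dimension $O(\log^2 n)$.

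For the ETH lower bound I would chase parameters: the whole composition is a polynomial-time many-one reduction from \pname{SAT} on $N$ variables/clauses to \pname{$k$-Center} on a graph with $n=N^{O(1)}$ vertices and highway dimension $h=O(\log^2 n)=O(\log^2 N)$, equivalently $\sqrt h=O(\log N)$ and so $N=2^{\Omega(\sqrt h)}$. Now suppose, for contradiction, there were a $(2-\eps)$-FPA algorithm for \pname{$k$-Center} parameterized by $h$ running in time $2^{2^{o(\sqrt h)}}\cdot n^{O(1)}$. Feeding it the graph $G$ produced above, its running time would be $2^{2^{o(\sqrt h)}}\cdot n^{O(1)} = 2^{2^{o(\log N)}}\cdot N^{O(1)} = 2^{N^{o(1)}}\cdot N^{O(1)} = 2^{o(N)}$, which decides \pname{SAT} in subexponential time and contradicts ETH. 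I would state ETH in the standard form that \pname{3-SAT} (hence \pname{SAT}) on $N$ clauses cannot be solved in time $2^{o(N)}$.

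The main obstacle I anticipate is bookkeeping the constants so that the separation survives both the $(1+\eps')$-distortion of \autoref{lem:dd-hd} and the $(2-\eps)$-approximation: one must pick $\eps'$ small enough (as a function of $\eps$, but still a fixed constant) that $(1+\eps')\cdot 1 < \tfrac{1}{2-\eps}\cdot 2$, i.e.\ $(1+\eps')(2-\eps)<2$, so that a $(2-\eps)$-approximate cost below $2$ certifies the yes-instance — while simultaneously keeping $\eps'$ constant so that the highway-dimension bound stays $O(\log^2 n)$ rather than blowing up with $1/\eps'$. Once this is fixed, the two-sided gap ($<3/2$ versus $\ge 2$) is robust, and the parameter-tracking for the ETH consequence is routine arithmetic with logarithms. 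A minor additional point to verify is that the reduction of \citet{feder1988metric-k-center} can indeed be taken from \pname{SAT} (or at least from a problem with no $2^{o(N)}$ algorithm under ETH), which it is, since it is ultimately a reduction from (a restricted) \pname{Satisfiability}/\pname{Dominating Set} on sparse instances.
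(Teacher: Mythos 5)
Your proposal is correct and follows essentially the same route as the paper: invoke the Feder--Greene $(2-\eps)$-inapproximability in two-dimensional $L_\infty$ grid metrics (doubling dimension $2$, aspect ratio at most $n$), apply \autoref{lem:dd-hd} with a small constant distortion to get highway dimension $O(\log^2 n)$, and then track the parameters $n=N^{O(1)}$, $h=O(\log^2 N)$ so that a $2^{2^{o(\sqrt h)}}\cdot n^{O(1)}$ algorithm would decide \pname{SAT} in $2^{o(N)}\cdot N^{O(1)}$ time, contradicting ETH. Your extra bookkeeping of the $(1+\eps')$-distortion versus the $(2-\eps)$ factor is exactly the detail the paper leaves implicit, and it is handled correctly.
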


The following lemma gives further evidence that obtaining a $(2-\eps)$-FPA 
algorithm for parameter~$h$ is hard. As argued below, it excludes the existence 
of such algorithms that only use shortest path covers of constant scales.

\begin{lem}\label{lem:hard-h-const}
For any~$\eps>0$ it is NP-hard to compute a $(2-\eps)$-approximation for the 
\pname{$k$-Center} problem on graphs for which on any scale $r>0$ there is a 
locally $(3\cdot 2^{2r}-2)$-sparse shortest path cover $\spc(r)$. Moreover, this 
is true for instances where the optimum cost $\opt$ is at most $4$.
\end{lem}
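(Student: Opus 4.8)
The plan is to reduce from an NP-hard problem while keeping the instance so controlled that the shortest path cover condition comes for free. The first observation is that the condition in the lemma is, up to taking $\spc(r)=V$, just a bound on ball sizes: in a graph with unit edge lengths and maximum degree at most $3$, a ball of integer radius $m$ contains at most $1+3+3\cdot 2+\dots+3\cdot 2^{m-1}=3\cdot 2^{m}-2$ vertices --- the same growth as a complete binary tree --- so $|B_v(2r)|\le 3\cdot 2^{\lfloor 2r\rfloor}-2\le 3\cdot 2^{2r}-2$ for every $r>0$, and the trivial hitting set $\spc(r)=V$ is then automatically locally $(3\cdot 2^{2r}-2)$-sparse on every scale. (If one prefers to allow a few higher-degree vertices, one can instead take a vertex-cover-like hub set and argue it stays within this bound, but the subcubic case is the clean one and suffices.) So it is enough to show that, for every $\eps>0$, computing a $(2-\eps)$-approximation for \pname{$k$-Center} is NP-hard already on unit-length graphs of maximum degree $3$ whose optimum is at most $4$.

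For the reduction I would start from \pname{Dominating Set} (or \pname{Vertex Cover}), which is NP-hard. Given an instance $(H,k)$, build a graph $G$ with unit edge lengths whose maximum degree is $3$ by replacing each vertex $v$ of $H$ by a constant-radius bounded-degree gadget that fans out the (up to $\deg v$) incidences of $v$, and by routing every adjacency of $H$ through these gadgets as a path, chosen so that two $H$-adjacent vertices end up at distance exactly $4$ in $G$ and two non-adjacent ones at distance at least $8$; one also adds internal pendant structure to pin any low-cost solution to exactly one centre per chosen $H$-vertex. The intended correspondence is: $H$ has a dominating set of size $k$ if and only if $G$ has a $k'$-center solution of cost $4$, and otherwise every $k'$-center solution of $G$ has cost at least $8$. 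Since $8>(2-\eps)\cdot 4$ for every $\eps>0$, a polynomial-time $(2-\eps)$-approximation on these instances would decide \pname{Dominating Set}, giving P$=$NP. By construction $G$ has unit edge lengths and maximum degree $3$, so by the first paragraph it satisfies the stated shortest path cover condition on every scale $r>0$, and the optimum in the hard instances is $4$.

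Finally I would check the routine parts --- that $G$ has the claimed degree and edge lengths, and that the dominating-set-to-centres direction really achieves cost $4$ --- and then do the combinatorial core. The main obstacle is precisely the design of the degree-reducing gadget: it must keep the maximum degree at $3$ so the binary-tree ball bound applies on all scales; it must have radius bounded by an absolute constant, independent of $\deg v$ and of $n$, so the overall optimum stays at $4$ rather than growing (note that an unbounded-degree vertex cannot be fanned out by a constant-radius tree, so the base instance must itself have bounded degree, or the fan-out has to be charged carefully against the subdivision length); and --- most delicately --- it must preserve the separation to a factor $2$ in both directions. The subtle point is to rule out ``cheating'': a centre placed on an internal gadget vertex lies strictly between the distance scales $4$ and $8$ to several $H$-vertices, so one must show that such a centre cannot cover, at cost below $8$, pieces of more than the one $H$-vertex neighbourhood it is supposed to, and hence that any cost-$<8$ solution can be converted into a genuine dominating set of $H$ of size at most $k$. (Alternatively, if one is willing to cite NP-hardness of \pname{Dominating Set} already on subcubic graphs, $G$ can be taken to be that instance with unit edge lengths, the optimum is then $1$ in the yes-case and at least $2$ otherwise, and no gadget is needed; the value $4$ --- and the resulting sparsity constant $3\cdot 2^{2\cdot 2}-2=46$ at the scale $r=\opt/2=2$ that the algorithm of \autoref{thm:alg} inspects --- is what the gadget version is tailored to produce.)
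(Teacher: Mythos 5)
Your opening observation coincides with the paper's: in a unit-length graph of maximum degree $3$ every ball of radius $2r$ contains at most $3\cdot 2^{2r}-2$ vertices, so \emph{any} hub set is locally $(3\cdot 2^{2r}-2)$-sparse, and the paper does indeed reduce from \pname{Dominating Set} on cubic graphs with unit edge lengths. The gap in your proposal lies in the second sentence of the lemma, the clause ``moreover, this is true for instances where $\opt\leq 4$''. Your main route (a degree-reducing gadget calibrated so that adjacency in $H$ becomes distance exactly $4$ and non-adjacency distance at least $8$) is only a plan: the gadget is never constructed and the crucial soundness step that you yourself flag (ruling out ``cheating'' centers inside gadgets) is never argued; moreover, as you note, a constant-radius subcubic gadget cannot fan out unbounded degree, so you would have to start from bounded-degree \pname{Dominating Set} anyway. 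Even granting the gadget, its NO-instances have optimum at least $8>4$, so the hard instances are not confined to the promised class $\opt\leq 4$. Your fallback (plain subcubic \pname{Dominating Set} with unit lengths, gap $1$ versus $\geq 2$) proves the first sentence of the lemma, but not the second: when no dominating set of size $k$ exists, the \pname{$k$-Center} optimum can be arbitrarily large, so this reduction again queries instances far outside the class $\opt\leq 4$.

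The missing idea is the paper's Turing-style reduction that guesses $|D|$ in \emph{decreasing} order starting from $n$, and thereby only ever invokes the hypothetical $(2-\eps)$-approximation on instances whose optimum is at most $4$. For every $k\geq|D|$ the optimum is $1$, so the algorithm must return a dominating set of size $k$; the first failure occurs at $k=|D|-1$. At that value one either sees more than $k$ connected components (and rejects this guess without calling the algorithm), or some component contains at least two vertices of $D$; since the balls of radius $1$ around $D$ cover a connected component, that component contains two centers of $D$ within distance $3$ of each other, and deleting one of them leaves a center set of size $k$ and cost at most $4$. Hence every instance on which the approximation algorithm is run has $\opt\leq 4$, which is exactly what the ``moreover'' clause --- and the subsequent corollary about locally $46$-sparse covers at scale $\opt/2\leq 2$ --- requires. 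Without an argument of this kind (or a gap construction whose NO-instances also satisfy $\opt\leq 4$), your proposal establishes only the unrestricted hardness statement.
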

\begin{proof}
The reduction is similar to the one used for \autoref{thm:hard-k}, but reduces 
from the NP-hard \pname{Dominating Set} problem on cubic 
graphs~\cite{alimonti2000cubic-graphs}. To obtain an instance of 
\pname{$k$-Center}, again we simply introduce unit edge lengths, guess the size 
of the minimum dominating set $D$, and set $k=|D|$. In contrast to the 
reduction of \autoref{thm:hard-k} however, we will guess the size of $D$ in 
decreasing order starting from $n$. As before, any feasible center set of cost 
$1$ corresponds to a dominating set, and vice versa, while on the other hand, a 
center set has cost at least $2$ if and only if it is not a dominating set. 
Hence whenever $k$ is at least~$|D|$ a $(2-\eps)$-approximation for 
\pname{$k$-Center} must have cost $\rho=1$, and the cost is at least $2$ for 
smaller $k$. Therefore guessing the size of $D$ in decreasing order, $k$ is 
equal to $|D|$ the last time a $(2-\eps)$-approximation of cost $1$ is computed 
by an algorithm for \pname{$k$-Center}.

Consider the value $k=|D|-1$, i.e., the iteration at which we realize the size 
of~$D$. If the number of connected components of the input graph exceeds $k$, we 
know that there cannot be a dominating set of size $k$, and we can dismiss this 
value as a guess for the size of $D$ right away. Otherwise, there is a 
connected component with at least two vertices of $D$, since $|D|=k+1$. It is 
easy to see that removing one of these two vertices results in a center set of 
size $k$ with cost at most $4$. Hence we only need to call the 
$(2-\eps)$-approximation algorithm for \pname{$k$-Center} on instances where 
the optimum cost is $\rho\leq 4$.

It is easy to see that any ball with radius $2r$ around a vertex $v$ contains 
at most $3(\sum_{i=0}^{2r-1} 2^i)+1=3\cdot 2^{2r}-2$ vertices, due to the bound 
on the maximum degree. Hence any set of hubs is locally $(3\cdot 
2^{2r}-2)$-sparse, which concludes the proof.
\end{proof}

Consider a $(2-\eps)$-FPA algorithm for \pname{$k$-Center}, which only takes 
shortest path covers of constant scales into account, where the parameter is 
their sparseness. That is, the algorithm computes a $(2-\eps)$-approximation 
using hub sets $\spc(r)$ only for values $r\leq R$ for some $R\in O(1)$, and the 
parameter is a value $s$ such that $\spc(r)$ is locally $s$-sparse for every 
$r\leq R$. By \autoref{lem:hard-h-const} such an algorithm would imply that 
P=NP. Moreover this is true even if $R\in O(\opt)$. Hence if it is possible to 
beat the inapproximability barrier of $2$ using the local sparseness as a 
parameter, then such an algorithm would have to take large (non-constant) scales 
into account. Note that the running time of our $3/2$-FPA algorithm can in fact 
be bounded in terms of the local sparseness of $\spc(\opt/2)$ instead of the 
highway dimension. The instances produced by the reduction of 
\autoref{lem:hard-h-const} have shortest path covers that are locally 
$46$-sparse on scale $r=\opt/2\leq 2$. Thus we obtain the following corollary, 
which is a matching hardness lower bound to our algorithm.

\begin{crl}
For any~$\eps>0$ it is NP-hard to compute a $(2-\eps)$-approximation for the 
\mbox{\pname{$k$-Center}} problem on graphs for which on scale $r=\rho/2\leq 2$ 
there is a locally $46$-sparse shortest path cover~$\spc(r)$, where $\rho$ is 
the optimum \pname{$k$-Center} cost.
\end{crl}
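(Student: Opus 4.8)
The plan is to derive this corollary directly from \autoref{lem:hard-h-const}, which already provides a polynomial-time reduction producing NP-hard \pname{$k$-Center} instances in which the optimum cost $\rho$ satisfies $\rho\leq 4$ and in which \emph{every} scale $r>0$ admits a locally $(3\cdot 2^{2r}-2)$-sparse shortest path cover $\spc(r)$. The only thing to do is to specialize the scale bound to the single scale $r=\rho/2$ relevant to our $3/2$-FPA algorithm.

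First I would observe that $\rho\leq 4$ forces $r=\rho/2\leq 2$. Next, since the map $r\mapsto 3\cdot 2^{2r}-2$ is monotonically increasing, its value on the interval $r\in(0,2]$ is maximized at $r=2$, where it equals $3\cdot 2^{4}-2 = 48-2 = 46$. Hence the shortest path cover $\spc(r)$ for scale $r=\rho/2\leq 2$ guaranteed by \autoref{lem:hard-h-const} is in particular locally $46$-sparse, which is exactly the structural property asserted in the corollary. Since computing a $(2-\eps)$-approximation on the instances of \autoref{lem:hard-h-const} is NP-hard, and those instances belong to the (broader, since we impose only a single weaker structural condition) class described here, the claimed NP-hardness follows immediately.

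There is no genuine obstacle here: the argument reduces to the monotonicity of the sparsity bound and the arithmetic $3\cdot 16-2=46$. The reason for isolating this as a separate corollary is presentational — it exhibits a concrete family of constant-sparsity instances on which the $(2-\eps)$-inapproximability persists, thereby providing a matching lower bound to \autoref{thm:alg}, whose running time can be phrased in terms of the local sparsity of $\spc(\rho/2)$ rather than the highway dimension itself.
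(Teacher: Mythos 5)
Your proof is correct and follows the same route as the paper: the corollary is obtained directly from \autoref{lem:hard-h-const} by noting that $\rho\leq 4$ gives $r=\rho/2\leq 2$, and that the sparsity bound $3\cdot 2^{2r}-2$ evaluated at (or below) $r=2$ is at most $46$, so the hard instances are locally $46$-sparse at scale $\rho/2$. The explicit monotonicity remark is a minor elaboration of what the paper states in passing; nothing is missing.
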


From this corollary and \autoref{thm:hard-k}, we can conclude that our algorithm 
necessarily needs to combine the parameter $h$ with $k$ in order to achieve its 
approximation guarantee of $3/2$.

\section{Generalizations of the {\mdseries\pname{$k$-Center}} 
problem}\label{sec:general}

{\bf The {\mdseries\pname{weighted $k$-Center}} problem} is defined by giving 
each vertex $v\in V$ an integer weight $w(v)\in\mathbb{N}$. The aim is to find a 
set $C\subseteq V$ of centers such that their total weight is at most $k$, 
i.e., 
$\sum_{v\in C} w(v)\leq k$, and the maximum distance of any vertex to its 
closest center is minimized. \citet{hochbaum1986bottleneck} gave a 
$3$-approximation to the problem, and no better approximation factor is known. 
However, \autoref{alg_main} can be modified to obtain a $2$-FPA algorithm for 
\pname{weighted $k$-Center} for parameters $k$ and $h$ in graphs of highway 
dimension $h$.

For this, \autoref{alg_main} will again guess $r=\opt/2$ in 
\autoref{alg:guess-opt1} to \autoref{alg:guess-opt2}, where $\opt$ is the cost 
of an optimum solution. The three center sets $C_1, C_2, C_3$ will be chosen 
more carefully respecting the weights. In particular, instead of setting 
$C_1=H$ in \autoref{alg:set-C1}, for each hub $x\in H$ we pick a cheapest 
vertex in the ball $B_{x}(r)$ around $x$ to be a center of $C_1$, i.e., we pick 
a vertex from $\arg\min\{w(u)\mid u\in B_x(r)\}$. If $H$ was guessed correctly 
so that each non-cluster center $u\in C^*_1$ of the optimum solution $C^*$ has 
a hub of $H$ at distance at most $r$, then there is also a center $v$ in $C_1$ 
at distance at most $2r$ from~$u$. Hence a ball of radius $4r$ around $v$ will 
contain the ball of radius $2r$ around $u$, i.e., $R^*_1\subseteq \bigcup_{v\in 
C_1}B_{v}(4r)$. Furthermore, $w(v)\leq w(u)$ and hence the total weight of 
$C_1$ is at most that of $C^*_1$.

In \autoref{alg:set-C2} of \autoref{alg_main}, instead of picking an arbitrary 
vertex of the cluster~$T$, we will pick a vertex of $T$ with minimum weight. 
Since the choice of a vertex in $T$ was arbitrary before, we still have 
$R^*_2\subseteq R_2$. Additionally the total weight of $C_2$ is at most that of 
$C^*_2$ since each cluster of $\mc{U}$ contains a center of $C^*_2$ if $H$ was 
guessed correctly. To compute $C_3$ we will solve the \pname{weighted Set Cover} 
problem in \autoref{alg:red2}, where the weight of each set 
$B_v(\rho)\cap\spc(r)$ equals $w(v)$. This can easily be done by adapting the 
dynamic program of \citet{fomin2005exact} to respect weights of sets
(cf.~\cite{pc-book}). Hence the weight of the resulting center set $C_3$ is at 
most that of $C^*_3$, and balls of radius $3r$ around centers in $C_3$ still 
cover all remaining vertices, by the same arguments as in the proof of 
\autoref{lem:covered-region}.

In conclusion, the set of centers $C=C_1\cup C_2\cup C_3$ computed by the 
modified algorithm has a total weight at most that of $C^*$, and balls of radius 
$4r$ around the centers in $C$ cover all vertices. Since the weights are 
integers, there are at most $ks$ hubs for scale $\opt/2$ if $\spc(\opt/2)$ is 
locally $s$-sparse (cf.~\autoref{lem:bounded-hubs}). Therefore we obtain a 
$2$-FPA algorithm for the combined parameter $(k,h)$.

\medskip

{\bf For the {\mdseries\pname{$(k,\mc{F})$-Partition}} problem} a family 
$\mc{F}$ of unweighted graphs is given, such that for any $n\in\mathbb{N}$ there 
is a graph in $\mc{F}$ with exactly $n$ vertices. Given an input metric 
$(X,\dist_X)$ and a value $c\in\mathbb{R}^+$, the \emph{bottleneck graph} 
$H_X(c)$ on vertex set $X$ has an edge for every pair of vertices $u,v\in X$ 
with $\dist_X(u,v)\leq c$. For the \pname{$(k,\mc{F})$-Partition} problem the 
minimum cost $c$ needs to be found such that $X$ can be partitioned into $k$ 
sets $X_1,\ldots,X_k$, and there is a spanning subgraph $G\in\mc{F}$ in $H_X(c)$ 
on the vertex set $X_j$ for each $j\in\{1,\ldots,k\}$.

Note that if $\mc{F}=\{K_{1,i}\}_{i\geq 0}$, i.e., each graph in the family is a 
star, we have the \pname{$k$-Center} problem, and if $\mc{F}=\{K_i\}_{i\geq 1}$, 
i.e., each graph in the family is a clique, we have the so-called 
\pname{$k$-Clustering} problem. The \emph{eccentricity} of a vertex $v$ is the 
maximum distance from $v$ to any other vertex in terms of number of edges (i.e., 
measured by the \emph{hop-distance}). The \emph{diameter} of an unweighted graph 
$G$ is defined as the maximum eccentricity of any vertex in $G$. If the diameter 
of each $G\in\mc{F}$ is at most $d$, then a $2d$-approximation can be obtained 
for the \pname{$(k,\mc{F})$-Partition} problem~\cite{hochbaum1986bottleneck}. 

Let the \emph{radius} of a graph $G$ be the minimum eccentricity of any vertex 
in $G$. For shortest-path metrics induced by graphs of highway dimension $h$, we 
can obtain a $3\delta$-FPA algorithm for the combined parameter~$(k,h)$ for the 
\pname{$(k,\mc{F})$-Partition} problem, if every graph in the family $\mc{F}$ 
has radius at most $\delta$. Hence for graph families $\mc{F}$ for which 
$3\delta < 2d$, this improves on the $2d$-approximation 
by~\citet{hochbaum1986bottleneck}. This is, for example, the case when $\mc{F}$ 
contains ``stars of paths'', i.e., stars for which each edge is replaced by a 
path of length at most $\delta$. The diameter of such a graph can be $2\delta$, 
while the radius is at most $\delta$, and hence~$3\delta<2d=4\delta$.

To obtain our algorithm we reduce the \pname{$(k,\mc{F})$-Partition} problem to 
\pname{$k$-Center}. Note that if there is an optimum solution to 
\pname{$(k,\mc{F})$-Partition} with cost $\opt$, then there must be a solution 
of cost $\delta\opt$ for \mbox{\pname{$k$-Center}}: for each $X^*_j$ of the 
optimum partition for \pname{$(k,\mc{F})$-Partition}, place a center at a vertex 
$v$ of~$X^*_j$ that minimizes the eccentricity in the graph $G\in\mc{F}$ 
spanning $X^*_j$. Since every edge of $G$ has length at most $\opt$, the ball 
$B_{v}(\delta \opt)$ will contain~$G$. Computing a $3/2$-approximation to 
\pname{$k$-Center} using \autoref{alg_main}, we obtain a set $C$ of $k$ centers 
such that the closest center to any vertex is at distance at most~$3\delta 
\opt/2$. For each center~$v_j\in C$, $j\in\{1,\ldots, k\}$, consider the set of 
vertices $X_j$ for which $v_j$ is the closest center (including $v_j$ itself), 
breaking ties arbitrarily. The distance between any two vertices in $X_j$ is at 
most $3\delta \opt$. Hence the vertices of the bottleneck graph $H_X(3\delta 
\opt)$ can be partitioned into the sets $X_1,\ldots,X_k$ such that each $X_j$ is 
a clique in $H_X(3\delta \opt)$. Clearly this also means that each $X_j$ has 
some graph of $\mc{F}$ as a spanning subgraph in~$H_X(3\delta \opt)$. Thus we 
obtain a $3\delta$-FPA algorithm to the \pname{$(k,\mc{F})$-Partition} problem 
for metrics induced by low highway dimension graphs, with the same asymptotic 
running time as \autoref{alg_main}. 

Note that the reduction would not yield an improved approximation ratio if a 
$2$-approximation was used to solve \pname{$k$-Center} (which in many cases is 
the best achievable approximation ratio, as summarized in the introduction), 
since the radius is always at least half the diameter of a graph, i.e., a 
$2d$-approximation is already a $4\delta$-approximation.

\section{Open problems}

In this last section we summarize some problems left open by our results. The 
most pressing unanswered question concerns the approximability of 
\pname{$k$-Center} using only the highway dimension $h$ as a parameter. Even 
though we obtained some partial answers in \autoref{sec:hard}, these do not 
exclude the existence of a $(2-\eps)$-FPA for parameter $h$ alone. Also whether 
better approximation ratios than~$3/2$ can be obtained for the combined 
parameter $(k,h)$ remains open. In particular, the approximation scheme given by 
\citet{hd-Becker2017} for parameter $(k,h)$ using the more restrictive highway 
dimension definition as used in~\cite{FeldmannFKP-highway-2015}, makes this an 
appealing possibility. Even more intriguing would be a hardness result that
excludes approximation schemes for parameter $(k,h)$ using the more general 
\autoref{dfn:spc} for the highway dimension. This would imply that the 
difference between these definitions is more than just ``cosmetic''. We note at 
this point that all lower bound results of \autoref{sec:hard} are applicable to 
the highway dimension definition used 
in~\cite{hd-Becker2017,FeldmannFKP-highway-2015} (only the constants in the 
sparsity of the shortest path covers increase).

Another interesting open question concerns the computability of the highway 
dimension. In particular, obtaining better approximation ratios than $O(\log 
h)$, as given in~\cite{abraham2011vc}, would improve the running time of not 
only the \pname{$k$-Center} algorithm presented here, but also the algorithms 
given in~\cite{hd-Becker2017,FeldmannFKP-highway-2015}. This is true even if 
the running time is parameterized in the highway dimension~$h$. Hence an 
important question is whether computing the highway dimension is FPT for the 
canonical parameter $h$, or even whether an $o(\log h)$-FPA algorithm exists for 
this parameter.

Finally, it would also be interesting to see whether the techniques developed 
here (or in~\cite{hd-Becker2017,FeldmannFKP-highway-2015}) can be used for 
other variants of the \pname{$k$-Center} problem, such as, for instance, the 
\pname{$k$-Supplier} problem~\cite{hochbaum1986bottleneck}.

\bibliographystyle{plainnat}
\bibliography{papers}

\end{document}